\numberwithin{equation}{section}
\newtheorem{theorem}{Theorem}[section]
\newtheorem{lemma}[theorem]{Lemma}
\newtheorem{proposition}{Proposition}[section]
\newtheorem{assumption}{Assumption}[section]
\numberwithin{equation}{section}
\newtheorem{condition}{Condition}[section]
\newlength{\dhatheight}
\newcommand{\anon}{1} % 1 = with author details; 0 = anonymous
\begin{document}

\def\spacingset#1{\renewcommand{\baselinestretch}{#1}\small\normalsize}
\spacingset{1}

% Page numbering starts on title page
\pagenumbering{arabic}\setcounter{page}{1}

% ===========================
% Full (with author details): anon = 1
% ===========================
\if1\anon
{
  \bigskip\bigskip\bigskip
  \begin{center}
    {\LARGE\bf Sparse VARs Do Not Imply Sparse Local Projections: Robust Inference for High-Dimensional Granger Causality}
  \end{center}
  \medskip

  \begin{center}
    Eug\`ene Dettaa\footnotemark[1]
    \hspace{1cm}
    Endong Wang\footnotemark[2]
  \end{center}

  \footnotetext[1]{D\'epartement de sciences \'economiques, Universit\'e de Montr\'eal, 3150 rue Jean-Brillant,
  Montr\'eal, H3T 1N8, Canada. Email: \href{mailto:eugene.delacroix.dettaa.mboudjiho@umontreal.ca}{eugene.delacroix.dettaa.mboudjiho@umontreal.ca}.
  Web: \url{https://eugenedettaa.github.io/}.}

  \footnotetext[2]{Department of Economics, University of Mannheim, L7, 3--5, Room 124, 68161 Mannheim, Germany.
  Email: \href{mailto:endong.wang@uni-mannheim.de}{endong.wang@uni-mannheim.de}.
  Web: \url{http://www.endongwang.com}.}

  % Acknowledgments/Funding: put later in the paper (and omit in anon=0 version).
}
\fi

% ===========================
% Anonymous version: anon = 0
% ===========================
\if0\anon
{
  \bigskip\bigskip\bigskip
  \begin{center}
    {\LARGE\bf Sparse VARs Do Not Imply Sparse Local Projections: Robust Inference for High-Dimensional Granger Causality}
  \end{center}
  \medskip
}
\fi

\bigskip
\begin{abstract}
This paper studies multi-horizon Granger causality using high-dimensional local
projections in sparse Vector Autoregressive (VAR) systems. Since local
projection coefficients are nonlinear transformations of the underlying VAR
parameters, existing approaches, such as de-biased least absolute shrinkage and selection operator (LASSO) and
post-double-selection methods applied directly to local projections, lack a
general justification, as sparsity of the VAR does not always propagate to higher horizons.
We propose a two-step framework that avoids imposing sparsity at each horizon and delivers valid inference without
relying on heteroskedasticity- and autocorrelation-consistent (HAC) corrections.
We establish large sample theory for the proposed estimators and develop feasible Wald tests.
Monte Carlo experiments demonstrate improved size control across horizons relative to existing methods.
An application to large financial systems illustrates horizon-specific connectedness.
\end{abstract}

\noindent{\it Keywords:} High-dimensional inference; Local projections; Neyman orthogonalization; Network connectedness; Robust testing.

\smallskip
\noindent{\it JEL Classification:} C13, C32, C36, C38, G12.

\vfill
\newpage
\spacingset{1.8} % DON'T change the spacing!

\section{Introduction}

This paper studies high-dimensional local projections (LPs) and develops Gaussian and robust inference for horizon-specific Granger-causal parameters in large dynamic systems. Local projections provide a transparent representation of multi-step dynamics and underpin a broad class of multi-horizon objects, including Granger-causality measures and network connectedness statistics\footnote{See, e.g., \cite{lutkepohl1993testing,dufour1998short,jorda2005estimation,dufour2006short,dufour2010short,dufour2012measuring,zhang2016exchange,diebold2014network,salamaliki2019transmission,diebold2023past}.}. Multi-horizon Granger causality is particularly relevant for connectedness analysis because spillovers may arise with delay and persist across horizons; focusing solely on one-step-ahead causality can therefore understate systemic interdependence, even though formal Granger-causality-based connectedness measures are largely confined to a single horizon \citep{billio2012econometric,brunetti2019interconnectedness,franch2024temporal}.

We depart from the existing literature along two dimensions. First, we impose sparsity only on the underlying data-generating VAR rather than separately at each projection horizon. Although LP estimands are numerically equivalent to those implied by the recursive VAR representation \citep{dufour1998short,wolf2021same}, the mapping from VAR slope parameters to horizon-specific LP coefficients is highly nonlinear and involves iterated dynamic propagation; consequently, even sparse VARs may induce dense LP coefficients at longer horizons, rendering horizon-by-horizon sparsity assumptions internally inconsistent. Second, we dispense with HAC-type corrections for serial correlation in projection errors. While standard LP inference relies on such corrections, HAC variance estimators are fragile and particularly unreliable at long horizons \citep{montiel2021local,xu2023local}. We instead develop inference procedures that remain valid without correcting for serial correlation.

Motivated by these considerations, we propose a two-step estimation framework for high-dimensional LPs. In the first step, we obtain regularized estimates of the underlying VAR that accommodate approximate sparsity. In the second step, we exploit closed-form mappings from VAR parameters to horizon-specific targets and apply a de-biasing correction to remove regularization-induced bias. This construction enables valid inference on multi-horizon parameters without imposing sparsity restrictions on the LP equations themselves. We illustrate the empirical relevance of the proposed methodology through multi-horizon Granger-causality tests in large financial systems, where horizon-specific restrictions yield a natural measure of system-wide connectedness.

This paper contributes to the literature on high-dimensional time series and multi-horizon Granger causality. To the best of our knowledge, we provide the first unified econometric framework for estimation and inference on horizon-specific Granger-causal effects in sparse high-dimensional VAR models.

First, building on the two-stage method of \citet{wang2024} developed for fixed-dimensional systems, we extend this framework to sparse high-dimensional VARs. Our approach relies on sparsity only at the level of the underlying VAR and does not impose sparsity on LP equations at each horizon. While existing results establish asymptotic properties in low-dimensional settings, their validity under high dimensionality has not been formally characterized. We close this gap by deriving large-sample theory for two-stage  estimators when the system dimension may exceed the sample size.

Second, we develop a robust inference theory for the resulting high-dimensional procedures. Under suitable regularity conditions, we establish asymptotic Gaussian inference and construct heteroskedasticity-consistent (HC) variance estimators that remain valid without HAC corrections. This approach avoids the finite-sample distortions associated with kernel and bandwidth selection in Newey--West-type inference and is computationally tractable in high-dimensional environments.

Third, we propose a de-biasing strategy for the two-stage estimators that builds on the de-sparsification framework of \citet{van2014asymptotically} and its extension to dynamic systems in \citet{adamek2023lasso}. The procedure combines regularized VAR estimation with Neyman-orthogonal score constructions \citep{chernozhukov2018double}, facilitating valid Gaussian inference for low-dimensional targets in high-dimensional settings.

We assess the finite-sample performance of Wald tests based on the proposed estimators through Monte Carlo experiments calibrated to empirically relevant designs. The results show that the two-stage procedure combined with HC variance estimation achieves accurate size control across all horizons, including long horizons where conventional HAC-based methods and post-double-selection approaches exhibit substantial distortions.

Finally, we illustrate the empirical relevance of our approach in an application to financial network connectedness. Focusing on major crisis episodes, the Lehman Brothers collapse, the Flash Crash, and the U.S. sovereign credit downgrade, we show that networks based on higher-horizon Granger-causality measures exhibit stronger and more persistent connectedness than networks constructed from one-step-ahead causality, highlighting the importance of longer-horizon transmission channels in assessing systemic risk.

\textit{Relevant literature.}
This paper contributes to the literature on regularized estimation in high-dimensional time series, including \cite{basu2015regularized,medeiros2016l1,wong2020lasso,masini2022regularized,adamek2023lasso}. We depart from this line of work by developing de-biased estimation and inference procedures for parameters in LP equations constructed from regularized VAR estimators. Our analysis builds on the desparsified estimation literature \cite{belloni2012sparse,van2014asymptotically,chernozhukov2018double,krampe2023structural} and complements recent work on single-horizon Granger causality in large systems \cite{hecq2023granger,babii2024high}. In contrast to \cite{adamek2024local}, who study de-biased inference for Sims impulse responses in high-dimensional LP models, we focus on multi-horizon Granger-causal parameters, following the conceptual distinction of \cite{dufour1998short}. We further extend the literature on robust inference in LP models \cite{montiel2021local,breitung2023projection,xu2024new,wang2024} to high-dimensional settings.

\textit{Outlines.}
Section~\ref{sec2frame} outlines the econometric framework. Section~\ref{sec5ts} presents the de-biased two-stage estimation method. Asymptotic Gaussian and robust inference results are derived in Section~\ref{sec6asym}. Monte Carlo evidence is reported in Section~\ref{sec7mc}, and an empirical application is provided in Section~\ref{sec8empirical}. Section~\ref{sec9con} concludes. Proofs are collected in the Online Appendix.

\textit{Notation.}
Throughout, $C$ denotes a generic positive constant whose value may change
from line to line, and $v_j$ denotes a conformable selection vector with one in the $j$-th position and zeros elsewhere. For a vector $x$, $\|x\|_1$ and $\|x\|_2$ denote the
$\ell_1$ and $\ell_2$ norms, respectively. For a matrix $B$, $\|B\|_1$,
$\|B\|_2$, and $\|B\|_\infty$ denote the induced matrix norms, and
$\|B\|_{\max}$ the elementwise maximum norm. For a symmetric matrix,
$\lambda_{\min}(\cdot)$ and $\lambda_{\max}(\cdot)$ denote its smallest and
largest eigenvalues.

\section{Econometric and financial motivations}
\label{sec2frame}

\subsection{Local projections for multi-horizon Granger causality}

Granger causality assesses whether the past of one variable improves the linear
prediction of another beyond its own history
\cite{granger1969investigating,geweke1984inference}. Extending the concept to
multi-step forecasts yields multi-horizon Granger causality
\cite{lutkepohl1993testing,dufour1998short}. Let
$w_t=(y_t,x_t,q_t')'$, where $x_t$ and $y_t$ are scalar variables of interest and
$q_t$ collects a possibly high-dimensional set of controls. Variable $x$ does not
Granger-cause $y$ at horizon $h$ if
\begin{align}
\label{gch}
\textup{P}_L\!\left( y_{t+h} \mid \mathcal{F}_{t} \right)
=
\textup{P}_L\!\left( y_{t+h} \mid \mathcal{F}_{-x,t} \right),
\end{align}
where $\textup{P}_L(\cdot\mid\cdot)$ denotes linear projection,
$\mathcal{F}_{t}=\sigma(w_t,w_{t-1},\ldots)$, and $\mathcal{F}_{-x,t}$ excludes the
history of $x$.

Suppose that $w_t$ follows a high-dimensional $\operatorname{VAR}(p)$,
\[
w_t = A_1 w_{t-1} + \cdots + A_p w_{t-p} + u_t,
\]
where $\{u_t\}$ is a martingale difference sequence with zero mean and nonsingular
covariance matrix $\Sigma_u$, satisfying $\lambda_{\min}(\Sigma_u)>0$. The lag
order $p$ is fixed, while the dimension $d$ may grow with the sample size $n$.

Under the VAR, the linear projection of $y_{t+h}$ on $\mathcal F_t$ admits the
local projection representation\footnote{See \cite{dufour1998short,wolf2021same}.}
$\textup{P}L(y_{t+h}\mid\mathcal F_t)=\beta_h' W_t$, where
$W_t=(w_t',w_{t-1}',\ldots,w_{t-p+1}')'$.
The coefficient vector $\beta_h$ corresponds to the row of the $h$-th power of
the VAR companion matrix $\mathbf A$ associated with $y$. Here,
$\mathbf A$ denotes the $dp\times dp$ companion matrix constructed from the VAR
slope matrices ${A_1,\ldots,A_p}$, so multi-horizon predictability is governed
by dynamic propagation through $\mathbf A^h$ (see, e.g.,
\cite[Section 2.2.6]{kilian2017structural}).

Partition $W_t=(W_{1,t}',W_{2,t}')'$, where
$W_{1,t}=(x_t,x_{t-1},\ldots,x_{t-p+1})'$ and $W_{2,t}$ contains all remaining
predictors. Then
\begin{align}
\label{lp_rew}
y_{t+h}
=
\beta_{1,h}' W_{1,t}
+
\beta_{2,h}' W_{2,t}
+
e_{t,h},
\end{align}
where $e_{t,h}$ is the projection error, which may be serially correlated. The
null hypothesis of Granger non-causality at horizon $h$ is
$\mathcal H_0:\beta_{1,h}=0$. Valid inference therefore requires estimating
$\beta_{1,h}$ in the presence of the high-dimensional nuisance parameter
$\beta_{2,h}$, even though sparsity of the VAR need not imply sparsity of
\eqref{lp_rew}.

\subsection{Sparsity does not propagate to local projections}

At horizon $h=1$, post--double-selection methods can be used to construct
de-biased estimators under the sparsity of the VAR and auxiliary regressions; see,
for example, \cite{hecq2023granger}. At longer horizons, however, feasibility
typically requires approximate sparsity of the local projection coefficients
themselves \cite{babii2024high}, an assumption that is generally untenable in
dynamic systems.

The key difficulty is that $\beta_h$ is determined by powers of the VAR companion
matrix $\mathbf A$, which are nonlinear functions of the primitive VAR
coefficients. As a result, sparse short-run VAR dynamics can generate dense
horizon-specific local projection coefficients as indirect transmission paths
accumulate over time. To illustrate this mechanism, consider a $d$-dimensional
VAR(1) with coefficient matrix $A$ satisfying
\begin{align}
\label{eq:bandA}
A_{ij}=\alpha\,\mathbbm{1}_{\{0\le j-i\le q\}},
\qquad |\alpha|<1,\quad q\ \text{fixed}.
\end{align}
Then $A$ is a sparse band matrix and is stable since $\rho(A)=|\alpha|<1$. However,
the horizon-$h$ coefficients correspond to $A^h$, whose support expands
monotonically with $h$ as longer transmission paths become active. In particular,
\begin{align}
\label{eq:support_growth}
\|v_1' A^h\|_0=\min\{d,qh+1\}.
\end{align}
where $\|\cdot\|_0$ denotes the cardinality of the support of a vector.
Thus, even when the VAR is sparse, the implied local projection
coefficients become dense as $h$ grows. This example
demonstrates that sparsity of the VAR does not propagate to
local projections, rendering horizon-by-horizon sparsity
assumptions inappropriate.

\subsection{Connectedness measures}

Existing Granger-causality-based measures of network connectedness with formal
testing are largely confined to a single horizon
\cite{billio2012econometric,brunetti2019interconnectedness}. By contrast,
\cite{diebold2014network} study multi-horizon connectedness using forecast error
variance decompositions without explicit inference. A
multi-horizon Granger-causality framework with formal testing therefore bridges
these two strands of the literature, while allowing for economically meaningful
horizon choices that range from short-term risk monitoring to longer-run
spillovers.

Let $p_{ij}(h)$ denote the $p$-value for testing the null that variable $j$ does
not Granger-cause variable $i$ at horizon $h$. For a given confidence level
$\alpha\in(0,1)$, define
\begin{align}
c_{ij}(h;\alpha)
=
\mathbf 1\!\left\{ p_{ij}(h)\le 1-\alpha \right\},
\qquad c_{ii}(h;\alpha)=0,
\end{align}
and summarize system-wide connectedness by the Degree of Granger Causality,
\begin{align}
\label{dgch}
\mathrm{DGC}(h;\alpha)
=
\frac{1}{N(N-1)}
\sum_{i=1}^{N}\sum_{j\neq i} c_{ij}(h;\alpha),
\end{align}
which measures the density of statistically significant predictive links and
takes values in $[0,1]$. Aggregating pairwise tests in this way, DGC is not
interpreted as a familywise error--controlled count of causal links, but as a
relative measure of network connectedness across horizons and subsamples. The use
of high confidence thresholds mitigates concerns about false discovery
accumulation, and the qualitative patterns are stable across alternative
thresholds, indicating systematic volatility spillovers rather than
noise-driven link formation.

\section{Two-stage estimation}
\label{sec5ts}

This section proposes a two-stage procedure for estimating the horizon-$h$
Granger causal parameters.

\subsection{Two-stage identification}

Suppose the reduced-form innovations $u_t$ satisfy a weak exogeneity condition
and the innovation covariance matrix $\Sigma_u$ is nonsingular. Let
$U_t=(u_t',u_{t-1}',\ldots,u_{t-p+1}')'$ denote the stacked innovation vector.
Under these conditions, the orthogonality restriction
$\mathrm P_L(y_{t+h}-\beta_h' W_t \mid U_t)=0$ holds, implying that $U_t$ is a valid
instrument set for $W_t$. Identification therefore follows from the moment
condition
\begin{align}
\label{eq:beta-ident}
\beta_h
=
\mathbb E[U_t W_t']^{-1}
\mathbb E[U_t y_{t+h}].
\end{align}
where the population moments admit closed-form expressions,
$\mathbb E[U_t W_t']=(I_p\otimes\Sigma_u)\Psi(p)$ and
$\mathbb E[U_t y_{t+h}]
=\mathbb E[U_t w_{t+h}']v_1
=(I_p\otimes\Sigma_u)
(\Psi_h',\Psi_{h+1}',\ldots,\Psi_{h+p-1}')'v_1$.\footnote{Since $y_t$ is the first element of $w_t$, $y_t=v_1'w_t$.}
Here $\Psi(p)$ is a $p\times p$ block matrix whose $(i,j)$ block equals
$\Psi_{i-j}'$ for $i\ge j$ and zero otherwise, and
$\Psi_h=J\mathbf A^h J'$. Since $\Sigma_u$ is full rank,
$\mathbb E[U_t W_t']$ is nonsingular, ruling out under-identification, as in the
static IV case.

To isolate the coefficient of interest, partition the instruments conformably
with the regressors. Define $U_{1,t}=R_1 U_t$ and $U_{2,t}=R_2 U_t$, where the
selection matrices $R_1$ and $R_2$ coincide with those used to define
$W_{1,t}=R_1 W_t=(x_t,x_{t-1},\ldots,x_{t-p+1})'$, and $W_{2,t}=R_2 W_t$ includes the rest of variables. Consequently, $U_{1,t}$ collects the VAR
innovations corresponding to $W_{1,t}$, and analogously for $U_{2,t}$ and
$W_{2,t}$. The Frisch--Waugh--Lovell theorem yields
\begin{align}
\label{4.15}
\beta_{1,h}
=
\mathbb E[U_{1,t}^{\perp} W_{1,t}']^{-1}
\mathbb E[U_{1,t}^{\perp} y_{t+h}],
\end{align}
where the residualized instrument
$U_{1,t}^{\perp}=U_{1,t}-\Gamma U_{2,t}$, with
$\Gamma=\mathbb E[U_{1,t} W_{2,t}']
\mathbb E[U_{2,t} W_{2,t}']^{-1}$,
satisfies the orthogonality condition
$\mathrm P_L(U_{1,t}^{\perp}\mid W_{2,t})=0$.
By construction, $U_{1,t}^{\perp}$ remains relevant for $W_{1,t}$ while being
orthogonal to the control regressors $W_{2,t}$.

\subsection{De-biased two-stage estimator}
This section develops a de-biased two-stage estimator for the multi-horizon
Granger-causal coefficient $\beta_{1,h}$ in the local-projection framework.
The construction is grounded in the population representation in \eqref{4.15}.
Identification is achieved through the residualized instrument
$U_{1,t}^{\perp}$, which partials out high-dimensional controls from the raw
instrument stack and isolates variation orthogonal to the nuisance component.

Let $\Sigma_{UW}:=\mathbb E[U_t W_t']$ denote the population cross-moment matrix
between the instrument vector $U_t$ and the regressor vector $W_t$. This matrix
captures the linear dependence between innovations and lagged regressors and
enters the definition of the partialling-out coefficient $\Gamma$. Using the
population expression for $\Gamma$, the residualized instrument admits the
closed-form representation
\begin{align}
\label{4.16_bis}
U_{1,t}^{\perp}
=
\bigl(R_1\Sigma_{UW}^{-1}R_1'\bigr)^{-1}
R_1\Sigma_{UW}^{-1}U_t,
\end{align}
which shows that $U_{1,t}^{\perp}$ depends only on second moments of $(U_t,W_t)$.

To implement \eqref{4.16_bis}, we exploit that $\Sigma_{UW}$ admits a closed-form
representation in terms of the reduced-form impulse responses
$\{\Psi_h\}_{h\ge 0}$ and the innovation covariance matrix $\Sigma_u$. We estimate
the VAR using a regularized estimator $\widehat{\mathbf A}_{1:p}^{(\mathrm{re})}$,
obtained by applying thresholding to Lasso-type estimates of the VAR slope
coefficients,
$\widehat{\mathbf A}_i^{(\mathrm{re})}
=
T_{\tau_A}(\widehat{\mathbf A}_i^{(\mathrm{lasso})})$,
with $\tau_A\asymp(\log(dp)/T)^{1/2}$, which ensures boundedness of the
column-wise $\ell_1$ norm as required by
Assumption~\ref{Assump_deLS_de2S}(\textit{iii}). Based on these estimates, we form
the fitted innovations
$\hat u_t := w_t-\widehat{\mathbf A}_{1:p}^{(\mathrm{re})}W_{t-1}$ and the sample
covariance estimator
$\widehat{\Sigma}_u := n^{-1}\sum_{t=p+1}^{n}\hat u_t\hat u_t'$. To stabilize covariance
estimation in high dimensions under sparsity of $\Sigma_u$
(Assumption~\ref{Assump_deLS_de2S}), we apply entrywise thresholding and define
$\widehat{\Sigma}_u^{(\mathrm{re})}:=T_{\tau_u}(\widehat{\Sigma}_u)$, where $T_{\tau_u}(\cdot)$
sets entries with absolute value below $\tau_u$ to zero. In theory, $\tau_u$ is
chosen on the order of $\|\widehat{\Sigma}_u-\Sigma_u\|_{\max}$
(see Section~\ref{sec6asym}); in applications, we set
$\tau_u\asymp(\log d/T)^{1/2}$.

Impulse responses are estimated from the regularized VAR. Let
$\widehat{\mathbf A}^{(\mathrm{re})}$ denote the associated companion matrix and
define $\hat\Psi_h := J(\widehat{\mathbf A}^{(\mathrm{re})})^h J'$. Substituting
$\{\hat\Psi_h\}$ and $\widehat{\Sigma}_u^{(\mathrm{re})}$ into the population
representation yields the plug-in estimator
\begin{align}
\widehat{\Sigma}_{UW}^{(\mathrm{re})}
=
\bigl(I_p\otimes\widehat{\Sigma}_u^{(\mathrm{re})}\bigr)\,\hat\Psi(p).
\end{align}

Given $\widehat{\Sigma}_{UW}^{(\mathrm{re})}$, the two-stage estimator of $\beta_{1,h}$
is obtained by instrumenting $W_{1,t}$ with residualized instruments
$\hat U_{1,t}^\perp$,
\begin{align}
\label{2S}
\hat \beta_{1,h}^{(2S)}
=
\bigg(\sum_t \hat U_{1,t}^{\perp} W_{1,t}'\bigg)^{-1}
\bigg(\sum_t \hat U_{1,t}^{\perp} y_{t+h}\bigg),
\end{align}
where $\hat U_{1,t}^{\perp}
=
\bigl(R_1(\widehat{\Sigma}_{UW}^{(\mathrm{re})})^{-1}R_1'\bigr)^{-1}
R_1(\widehat{\Sigma}_{UW}^{(\mathrm{re})})^{-1}\hat U_t,
$ $
\hat U_t=(\hat u_t',\ldots,\hat u_{t-p+1}')'.$ 
In high-dimensional settings, $\hat \beta_{1,h}^{(2S)}$ is generally not
$\sqrt n$-consistent, since the moment conditions are not Neyman-orthogonal with
respect to the high-dimensional nuisance parameter $\beta_{2,h}$.

We therefore apply a de-biasing correction and define
\begin{align}
\label{de-2s}
\hat \beta_{1,h}^{(\mathrm{de\text{-}2S})}
=
\hat \beta_{1,h}^{(2S)}
-
\bigg(\sum_t \hat U_{1,t}^{\perp} W_{1,t}'\bigg)^{-1}
\bigg(\sum_t \hat U_{1,t}^{\perp} W_{2,t}'\hat \beta_{2,h}\bigg),
\end{align}
where $\hat \beta_{2,h}$ is a consistent estimator of $\beta_{2,h}$ with
controlled bias. A convenient choice is the plug-in estimator from the
regularized VAR, $\hat \beta_{2,h}=J(\widehat{\mathbf A}^{(\mathrm{re})})^h$. This
correction removes the leading bias induced by projection onto $W_{2,t}$ and
restores first-order orthogonality, enabling valid inference.

\noindent\textbf{Remarks.}
\begin{enumerate}[(i)]

\item
We obtain regularized VAR slope estimates equation by equation via LASSO.
Specifically, for each $j=1,\ldots,d$, we solve
\begin{align*}
    \hat{A}_{j\bullet,1:p}^{(\mathrm{lasso})}
    =
    \arg\min_{A_{j\bullet,1:p}}
    \frac{1}{n-p}
    \sum_{t=p+1}^n
    \Bigg(
    w_{j,t}-\sum_{i=1}^p A_{j\bullet,i} w_{t-i}
    \Bigg)^2
    +
    \lambda \sum_{i=1}^p \|A_{j\bullet,i}\Pi_i\|_1 ,
\end{align*}
where $A_{j\bullet,1:p}$ denotes the $j$th row of the stacked VAR coefficient
matrices $A_{1:p}=(A_1,\ldots,A_p)$, and
$\Pi_i=\mathrm{diag}(\pi_{ik})_{k=1}^d$ collects penalty loadings. When $\pi_{ik}=1$ for all $i,k$, it reduces to the standard LASSO. Following \citet{belloni2012sparse}, we allow for data-dependent penalty loadings
to accommodate heterogeneity and self-normalization in the first-order
conditions.

\item
The covariance matrix $\Sigma_{UW}$ is generally not well approximated by the
naive sample covariance of $(\hat U_t,W_t)$ in high-dimensional settings, where $\hat U_t$ denotes stacked
least-squares VAR residuals. It is because such estimators may
be ill-conditioned or singular and fail to exploit the structural restrictions
implied by the VAR. In particular, $\Sigma_{UW}$ admits the structured
representation in \eqref{eq:beta-ident}, being lower triangular and block
Toeplitz, which the naive estimator does not impose. In finite samples, if the thresholded estimator $\widehat{\Sigma}_u^{(\mathrm{re})}$
is not full rank due to high dimensionality, we enforce nonsingularity via a
projection onto the space of positive definite matrices. This adjustment is
asymptotically innocuous under the maintained assumption that $\Sigma_u$ is
nonsingular, since with probability approaching one
$\widehat{\Sigma}_u^{(\mathrm{re})}$ lies in a neighborhood of $\Sigma_u$
contained in the interior of the positive-definite cone. Consequently, the
projection is locally smooth and the induced perturbation is
$o_p(n^{-1/2})$, leaving the first-order asymptotic expansion unchanged.

\item The de-biasing step is central to the construction of the proposed two-stage
estimator. As in high-dimensional least squares problems, regularization
introduces a bias that is non-negligible at the $\sqrt n$ scale and must be
accounted for to conduct valid inference. To see this, note that the two-stage
estimator admits the decomposition
\begin{align*}
\begin{split}
    \hat \beta_{1,h}^{(2S)}
    &=
    \bigg(\sum_t \hat U_{1,t}^{\perp} W_{1,t}' \bigg)^{-1}
    \bigg(\sum_t \hat U_{1,t}^{\perp} y_{t+h} \bigg) \\
    &=
    \beta_{1,h}
    +
    \bigg(\sum_t \hat U_{1,t}^{\perp} W_{1,t}' \bigg)^{-1}
    \bigg(\sum_t \hat U_{1,t}^{\perp} e_{t,h} \bigg)
    +
    \bigg(\sum_t \hat U_{1,t}^{\perp} W_{1,t}' \bigg)^{-1}
    \bigg(\sum_t \hat U_{1,t}^{\perp} W_{2,t}' \beta_{2,h} \bigg),
\end{split}
\end{align*}
where the final term represents the bias induced by estimation error in the
high-dimensional nuisance parameter $\beta_{2,h}$. This term generally does not
vanish at the $\sqrt n$ rate and motivates the subsequent de-biasing correction.

\item When $\hat U_{1,t}^{\perp}$ is constructed using the sample covariance estimator
of $\Sigma_{UW}$, the sample orthogonality condition
$\sum_t \hat U_{1,t}^{\perp} W_{2,t}'=0$ holds by construction. In high-dimensional
settings, however, $\hat U_{1,t}^{\perp}$ is computed using the population
representation of $\Sigma_{UW}$, so exact sample orthogonality is no longer
imposed and $n^{-1/2}\sum_t \hat U_{1,t}^{\perp} W_{2,t}'$ need not vanish. This
deviation is accommodated by the de-biasing correction and controlled via
Neyman orthogonality, leaving the $\sqrt n$ asymptotics unaffected.

\item The de-biased two-stage estimator
$\hat \beta_{1,h}^{(\mathrm{de\text{-}2S})}$ is defined as the solution to the
sample analogue of the moment condition, where the score function is
\[
\psi_t^{\mathrm{d2s}}\!\left(\beta_{1,h},\eta\right)
=
U_{1,t}^{\perp}
\bigl(
y_{t+h}-W_{1,t}'\beta_{1,h}-W_{2,t}'\beta_{2,h}
\bigr).
\]
The residualized instrument is defined as
$U_{1,t}^{\perp}:=U_{1,t}-\Gamma U_{2,t}$, with
$U_{1,t}=R_1 U_t$, $U_{2,t}=R_2 U_t$.
The nuisance vector
$\eta=\big(\beta_{2,h}',\mathrm{vec}(\Gamma)',\mathrm{vec}(\mathbf A)'\big)'$
collects the high-dimensional nuisance objects entering the score.
It is introduced as a convenient bundle of these objects, rather than as a
primitive parametrization of the VAR.
In particular, while $U_{1,t}^{\perp}$ is a random variable, its dependence on
the data is governed by nuisance functionals such as $\Gamma$, which are
themselves determined by the VAR primitives $(\mathbf A,\Sigma_u)$.
Accordingly, variations in the VAR primitives affect the score only through
these induced objects, and the score $\psi_t^{\mathrm{d2s}}$ satisfies Neyman
orthogonality with respect to $\eta$.

\item 
A key requirement for deriving the asymptotic distribution of the de-biased
two-stage estimator is that the bias arising from using estimated VAR residuals
$\hat u_t$ as instruments is asymptotically negligible. In the
low-dimensional setting, \citet{wang2024} establish that the estimation error
$(\hat u_t-u_t)$ does not contribute at the $\sqrt{n}$ order to the
two-stage estimator. In the subsequent section, we show that an analogous
property continues to hold in the high-dimensional framework considered here,
so that residual estimation error is asymptotically irrelevant for
first-order inference.

\end{enumerate}

\subsection{Newey--West--type HAC estimator}

We next derive feasible inference for the de-biased two-stage estimator by
characterizing its asymptotic distribution. The argument proceeds by isolating
the leading stochastic component in the $\sqrt{n}$-normalized estimation error
and verifying that all remaining terms are asymptotically negligible. We will show that the asymptotic behavior of the debiased two-stage
estimator is driven by a martingale-type sum involving the 
instrument $U_{t}$ and the projection error $e_{t,h}$. Importantly,
all effects stemming from first-stage regularization enter only through
higher-order remainder terms and therefore do not affect the limiting
distribution.

The asymptotic variance of
$\hat{\beta}_{1,h}^{(\mathrm{de\text{-}2S})}$ is given by
\begin{align}
\label{asymvar}
\operatorname{AVar}\!\left(
\sqrt{n}\hat{\beta}_{1,h}^{(\mathrm{de\text{-}2S})}
\right)
=
R_1 \Sigma_{UW}^{-1} 
\Omega_{U,h}
 \Sigma_{UW}^{\prime -1} R_1',
\end{align}
where
$\Omega_{U,h}
:=
\lim_{n\to\infty}
\operatorname{Var}\!\left(
n^{-1/2}\sum_t U_{t} e_{t,h}
\right)$
denotes the long-run variance of the score process.

For practical implementation, we construct a feasible estimator of the
asymptotic variance using a HAC procedure,
\begin{align}
\label{hac_de-2S}
\widehat{\operatorname{AVar}}^{(hac)}
\!\left(
\sqrt{n}\hat{\beta}_{1,h}^{(\mathrm{de\text{-}2S})}
\right)
=
R_1 \left(\widehat{\Sigma}_{UW}^{(\mathrm{re})}\right)^{-1} 
\hat\Omega_{U,h}^{(hac)}\left(\widehat{\Sigma}_{UW}^{(\mathrm{re})}\right)^{-1\prime} R_1',
\end{align}
where $\hat\Omega_{U,h}^{(hac)}$ is a consistent HAC estimator of
$\Omega_{U,h}$, such as the Newey--West estimator. In practice, the
latent instrument $U_{t}$ is replaced by its sample analogue
obtained from the sample analogue defined in \eqref{2S}. Moreover, since the
projection error $e_{t,h}$ is unobservable, we replace it by the residual
$\hat e_{t,h}=y_{t+h}-\hat\beta_h W_t$, where
$\hat\beta_h=v_1'(\widehat{\mathbf A}^{(\mathrm{re})})^h$. The resulting plug-in score
$\hat U_{t}\hat e_{t,h}$ is then used to construct the HAC covariance
estimator. Under standard regularity conditions, estimation error from the
first-stage regularization and the residualization step enters only through
higher-order terms and therefore does not affect first-order asymptotics. This
ensures the validity of the proposed HAC-based inference.

\subsection{Robust inference without serial correlation correction}
HAC-type covariance estimators are well known to perform poorly in finite samples,
particularly in long-horizon local projection settings, where they often induce
substantial size distortions. Their implementation further depends on
nontrivial kernel and bandwidth choices. These considerations motivate
alternative covariance estimators based solely on heteroskedasticity-robust
methods.

Replacing HAC estimation requires conditions under which serial dependence in
the score process can be eliminated. HAC estimators are primarily used to obtain
a positive semidefinite estimate of the long-run variance, a property that is
generally not preserved by naive aggregation of autocovariances. An alternative
approach is to construct a transformation of the score process that is
serially uncorrelated, so that the long-run variance coincides with the
contemporaneous covariance matrix. The long-run variance of the original score
process is the summation of all lead-lag autocovariances,
\begin{align}
\Omega_{U,h}
=
\sum_{k=-\infty}^{\infty}
\mathbb E\!\left[
U_t U_{t+k}' e_{t,h} e_{t+k,h}
\right].
\end{align}
The score process $U_t e_{t,h}$, where
$U_t=(u_t',u_{t-1}',\ldots,u_{t-p+1}')'$, is serially correlated, which motivates
the use of HAC corrections in standard inference. Following \cite{wang2024}, we
instead construct an alternative score sequence by re-indexing and stacking the
score components as
\begin{align}
\label{S_t}
s_t
:=
(e_{t,h},e_{t+1,h},\ldots,e_{t+p-1,h})'\otimes u_t .
\end{align}
This transformation aligns all score components with the contemporaneous
innovation $u_t$ and shifts serial dependence forward in time.

Under Assumption~\ref{cond_HC}, the transformed process $\{s_t\}$ is serially
uncorrelated. Consequently, the long-run variance of the original score process
coincides with the contemporaneous covariance matrix of $s_t$, namely
\begin{align}
    \sum_{j=-\infty}^{\infty}
\mathbb E\!\left[U_t e_{t,h}\, U_{t-j} e_{t-j,h}'\right]
=
\mathbb E[s_t s_t'] .
\end{align}
This identification ensures that the asymptotic variance appearing in the limit
distribution is consistently estimated by a heteroskedasticity-robust covariance
estimator, thereby obviating the need for HAC corrections.

\begin{assumption}
\label{cond_HC}
For all $t\ge1$:
\begin{enumerate}[(i)]
\item $\mathbb E[u_t\mid\{u_s\}_{s<t}]=0$ almost surely;
\item $\mathbb E[(u_tu_\tau')\otimes(u_{\tau+k}u_{\tau+k}')]=\mathbf 0$ for all
$\tau>t$ and $k>0$.
\end{enumerate}
\end{assumption}

Assumption~\ref{cond_HC}(i) imposes a martingale-difference structure on the
innovation process. Assumption~\ref{cond_HC}(ii) is a fourth-order orthogonality
condition: it requires that, for any $t<\tau$ and any forward offset $k>0$, the
cross-product $u_tu_\tau'$ is orthogonal (in the unconditional fourth moment) to
the future quadratic form $u_{\tau+k}u_{\tau+k}'$. This restriction is imposed
for variance estimation: it rules out precisely the type of intertemporal
fourth-moment dependence that would generate nonzero autocovariances in the
transformed score sequence $\{s_t\}$, so that a heteroskedasticity-consistent
variance estimator based only on contemporaneous score variation is valid.

Importantly, Assumption~\ref{cond_HC}(ii) does \emph{not} require serial
independence of $\{u_t\}$ and does \emph{not} rule out conditional
heteroskedasticity or volatility clustering. It allows the conditional covariance
matrix $\mathbb E[u_tu_t' \mid \{u_s\}_{s<t}]$ to vary over time, as long as this
variation does not induce the above cross--fourth-moment dependence across
nonoverlapping time blocks that would transmit serial dependence to $\{s_t\}$.
The condition is satisfied by a broad class of disturbance processes, including
i.i.d.\ shocks, mean-independent disturbances, conditionally homoskedastic
martingale differences, and conditionally Gaussian ARCH-type models (including
Gaussian GARCH specifications) in which the innovation is conditionally symmetric
and the relevant cross-products are orthogonal to future quadratic terms.

By contrast, Assumption~\ref{cond_HC}(ii) can be violated in environments where
second-moment dynamics feed back into future quadratic variation, e.g., through
leverage-type effects, which may induce serial dependence in the score sequence
and thus require HAC-type corrections. In the present framework,
Assumption~\ref{cond_HC} is imposed to rule out this source of score dependence.
In particular, it implies that the transformed score $s_t$ is serially
uncorrelated. The following proposition formalizes this implication.

\begin{proposition}
\label{proposition:uncorr}
Suppose Assumption~\ref{cond_HC} holds. Then
$\mathbb E[s_t s_\tau']=0$ for all $t\neq\tau$.
\end{proposition}

See Online Appendix for the proof. Accordingly, a
heteroskedasticity-robust estimator of the asymptotic covariance matrix for the
de-biased two-stage estimator is given by
\begin{align}
\label{hc_de-2S}
\widehat{\mathrm{AVar}}^{(hc)}
\!\left(\sqrt n\,\hat\beta_{1,h}^{(\mathrm{de\text{-}2S})}\right)
=
R_1\left(\widehat{\Sigma}_{UW}^{(\mathrm{re})}\right)^{-1}
\widehat{\mathrm{Var}}(\hat s_t)
\left(\widehat{\Sigma}_{UW}^{(\mathrm{re})}\right)^{'-1}
R_1',
\end{align}
where $\widehat{\mathrm{Var}}(\hat s_t)
=
\frac{1}{n-h}\sum_t \hat s_t \hat s_t'$, 
$\hat s_t
=
(\hat e_{t,h},\hat e_{t+1,h},\ldots,\hat e_{t+p-1,h})\otimes\hat u_t$,
$\hat u_t
=
w_t-\hat\Phi_{1:p}^{(re)}W_{t-1}$.
Here, $\hat e_{t,h}$ denotes the local-projection residual defined in
\eqref{2S}. Under standard regularity conditions, the estimator in
\eqref{hc_de-2S} is consistent for the asymptotic variance of
$\sqrt n\,\hat\beta_{1,h}^{(\mathrm{de\text{-}2S})}$.

\section{Asymptotic properties of two-stage estimators}
\label{sec6asym}

This section studies the large-sample behavior of the proposed de-biased two-stage
(de-2S) estimator. We proceed in two steps. First, we impose high-level assumptions for the regularized VAR estimator. Second, we establish asymptotic normality of the
de-2S estimator and consistency of feasible variance estimators. Both HAC and
heteroskedasticity-robust standard errors are covered.

\subsection{Assumptions}

We begin with conditions used in the preliminary consistency results. To justify
Lasso-type regularization for the VAR slope matrices
$\widehat{\mathbf A}^{(\mathrm{re})}_j$, $j=1,\ldots,p$, and their stacked form
$\widehat{\mathbf A}^{(\mathrm{re})}$, we impose approximate sparsity on the VAR
companion matrix. Following \cite{krampe2023structural} and
\cite{bickel2008covariance}, define the row-wise approximately sparse class
\begin{align}
\mathcal{U}(k,\mu)
=
\left\{
B=(b_{ij})\in\mathbb R^{r\times s}:
\max_{1\le i\le r}\sum_{j=1}^{s}|b_{ij}|^{\mu}\le k,
\ \|B\|_2\le C<\infty
\right\}.
\end{align}
This class includes exact sparsity as $\mu=0$ (interpreting
$\sum_{j=1}^s |b_{ij}|^\mu$ as the number of nonzero entries in row $i$) and
allows many small coefficients when $\mu\in(0,1)$, while controlling effective
row complexity through $k$.\footnote{In our application, $k$ may depend on
$(d,p)$ and is allowed to grow with $n$.}

Rates for $\ell_1$-regularized estimators in high-dimensional time series and
VARs are available under conditions of this type; see, e.g.,
\cite{basu2015regularized,adamek2023lasso}. We summarize the required inputs in a
high-level assumption.

\begin{assumption}\label{Assump_deLS_de2S}\leavevmode
\begin{enumerate}[(i)]
    \item \textup{(Row-wise and column-wise approximate sparsity)}
    $\mathbf{A} \in \mathcal{U}\left(k_A, \mu\right)$ and
    $\mathbf{A}^{\prime} \in \mathcal{U}\left(k_A, \mu\right)$ for some
    $\mu \in [0,1)$ and $k_A>0$.

    \item \textup{(Stability)} $\exists \,\varphi\in(0,1)$ such that $\forall \,m\in\mathbb N$,
    $\|\mathbf A^{m}\|_{2}\le C\varphi^{m}$ and
    $\|\mathbf A^{m}\|_{l}\le C k_A \varphi^{m}$ for $l\in\{1,\infty\}$.

    \item \textup{(Convergence rate of the Lasso-type regularized estimator)}
    $\widehat{\mathbf{A}}^{(\mathrm{r e})}$ satisfies
    \begin{align*}
    \left\|\widehat{\mathbf{A}}^{(\mathrm{r e})}-\mathbf{A}\right\|_l
    =
    O_p\left(
    k_A^{1.5}\left(\frac{\nu_n}{n}\right)^{(1-\mu) / 2}
    \right),
    \qquad l\in\{1,\infty\}.
    \end{align*}

    \item \textup{(Convergence rate of the sample covariance of innovations)}
    For all $U, V \in \mathbb{R}^{d \times d}$ with $\|U\|_2=1=\|V\|_2$,
    $$
    \bigg\|
    \frac{1}{n}\sum_{t=1}^n U\left(u_t u_t' -\Sigma_{u}\right)V
    \bigg\|_{\max }
    =
    O_p\left(\sqrt{\tilde{\nu}_n / n}\right).
    $$

    \item \textup{(Moment restrictions and weak dependence)}
    The VAR innovation process $\{u_t\}$ is $\alpha$-mixing with coefficients
    $\{\alpha(j)\}_{j\ge1}$ of size $r/(r-2)$ for some $r>2$, i.e.,
    $\sum_{j=1}^\infty j^{\frac{r}{r-2}-1}\alpha(j)<\infty,$
    and satisfies $\mathbb E|u_{i,t}|^{4r+\delta}\le q<\infty$ for all
    $i=1,\ldots,d$ and some $\delta>0$.

    \item \textup{(Nonsingularity and sparsity of the innovation covariance)}
    $\lambda_{\min}(\Sigma_u)\ge C>0$.
    Moreover, $\Sigma_u \in \mathcal U(k_U,\mu_u)$ for some $\mu_u\in[0,1)$ and
    $k_U>0$, and $\|\Sigma_u\|_\infty \le C k_U$.

    \item \textup{(Stability of the inverse covariance)}
    $\left\|\Sigma_{UW}^{-1}\right\|_{\infty} = O\!\left(k_{W}\right)$ for some
    $k_W>0$.

    \item \textup{(Fourth moments and long-run variance)}
    $\max_{1\le i\le d}\mathbb E|u_{i,t}|^{4}\le C,$
    $\max_{1\le i\le d}\mathbb E|w_{i,t}|^{4}\le C,$
    and the long-run variance matrix $\Omega_{U,h}$ satisfies
    $C^{-1}\le \lambda_{\min}(\Omega_{U,h})
    \le \lambda_{\max}(\Omega_{U,h}) \le C$.
\end{enumerate}
\end{assumption}

Assumption \ref{Assump_deLS_de2S}(i)--(ii) imposes approximate sparsity and
stability of the $dp\times dp$ companion matrix. The sparsity index $k_A$ may
grow with $(d,p)$, while stability yields geometric decay of $\mathbf A^m$ and
standard weak-dependence properties for the stacked state.

Assumption \ref{Assump_deLS_de2S}(iii) is a high-level rate for the regularized
estimator in $\|\cdot\|_1$ and $\|\cdot\|_\infty$, which are convenient for
bounding remainder terms. The factor $\nu_n$ collects dimension and tail effects
driving regularization error; for instance, one may take
$\nu_n=\log(dp)$ under sub-Gaussian tails, while heavier tails typically lead to
larger $\nu_n$.

Assumption~\ref{Assump_deLS_de2S}(iv) imposes a uniform concentration condition on
quadratic forms of the sample covariance estimator of $u_t$.
Equivalently, this requirement corresponds to an operator-norm concentration
bound for $\widehat{\Sigma}_u-\Sigma_u$. The condition does not impose sparsity on
$\Sigma_u$ itself and its effective dimensional complexity is summarized by the
factor $\tilde{\nu}_n$.  Assumption \ref{Assump_deLS_de2S}(v) provides a mixing and moment condition
sufficient for central limit and law-of-large-numbers arguments for the
quantities driving the de-biasing step. Assumption \ref{Assump_deLS_de2S}(vi)
ensures that $\Sigma_u$ is well conditioned and approximately sparse, which
supports thresholding-type estimation of $\Sigma_u$.

Assumption~\ref{Assump_deLS_de2S}(vii) controls the growth of
$|\Sigma_{UW}^{-1}|\infty$, which governs the sensitivity of the de-biasing
correction to estimation error in $\Sigma{UW}$.
This condition is imposed purely for inferential stability and is standard in
high-dimensional de-biasing arguments; see, for example,
Assumption~2(v) in \cite{krampe2023structural}. Finally,
Assumption \ref{Assump_deLS_de2S}(viii) ensures finite fourth moments and
uniform conditioning of the long-run variance $\Omega_{U,h}$; together with
(v), it underpins consistency of both HAC and heteroskedasticity-robust
variance estimators used for feasible inference.

\subsection{Theoretical results}

This subsection establishes asymptotic normality and feasible inference for the
de-biased two-stage estimator. The argument has two
ingredients. First, we derive high-dimensional consistency rates for the
covariance and cross-covariance estimators that enter the de-biasing correction.
Second, we impose growth conditions under which the resulting higher-order
remainder terms are asymptotically negligible, so that the studentized statistic
admits a standard Gaussian limit.

Under Assumption~\ref{Assump_deLS_de2S}, we obtain the following bounds for
$\widehat{\Sigma}_u$, $\widehat{\Sigma}_u^{(\mathrm{re})}$ and $\widehat{\Sigma}_{UW}^{(\mathrm{re})}$.

\begin{lemma}[Consistency results]\label{theo_consistency_deLS}
Under Assumption~\ref{Assump_deLS_de2S},
\begin{flalign*}
(i)\quad &
\|\widehat{\Sigma}_u-\Sigma_u\|_{\max}
=
O_p\Big(
\sqrt{\tilde{\nu}_n/n}
+
k_A^3(\nu_n/n)^{1-\mu}
+
k_A^{3/2}(\nu_n/n)^{(1-\mu)/2}\sqrt{\tilde{\nu}_n/n}
\Big)
=: \delta_n,
&\\
(ii)\quad&
\left\|\widehat{\Sigma}_u^{(\mathrm{re})}-\Sigma_u\right\|_{\infty}
=
O_p\big(k_U\delta_n^{1-\mu_u}\big);&\\
(iii)\quad&
\left\|\widehat{\Sigma}_{UW}^{(\mathrm{re})}-\Sigma_{UW}\right\|_{\infty}
=
O_p\Big(k_U
k_A^{3.5}\left(\nu_n/n\right)^{(1-\mu)/2}\delta_n^{1-\mu_u}+
k_Ak_U\delta_n^{1-\mu_u}
+
k_Uk_A^{3.5}\left(\nu_n/n\right)^{(1-\mu)/2} 
\Big).
\end{flalign*}
\end{lemma}
See Online Appendix for the proof. The rate $\delta_n$ collects two components. 
The first is the baseline high-dimensional sampling error
$\big\|n^{-1}\sum_{t=1}^n u_t u_t' - \Sigma_u\big\|_{\max}
= O_p(\sqrt{\tilde{\nu}_n/n})$ from Assumption~\ref{Assump_deLS_de2S}(\textit{iv}). 
The second is the plug-in error induced by estimating the VAR dynamics via
$\widehat{\mathbf A}^{(\mathrm{re})}$, which affects fitted innovations (and hence
$\widehat{\Sigma}_u^{(\mathrm{re})}$) and also propagates into the estimated impulse responses used to form
$\widehat{\Sigma}_{UW}^{(\mathrm{re})}$.
The second is sampling variability in covariance estimation, which would remain
even if the innovations $\{u_t\}$ were observed; this component is captured by
Assumption~\ref{Assump_deLS_de2S}(\textit{iv}). In particular, $\delta_n$ collects
both (a) the bias induced by using $\hat u_t$ in place of $u_t$ and (b) the
high-dimensional sampling error of the sample covariance of $\{u_t\}$. The bound
in part (iii) then combines the regularization error in
$\widehat{\Sigma}_u^{(\mathrm{re})}$ with the approximation error in the impulse
response coefficients $\hat{\Psi}_h$ constructed from powers of the regularized
transition matrix $(\widehat{\mathbf A}^{(\mathrm{re})})^h$.

As a benchmark, suppose $k_A$ and $k_U$ are fixed, the VAR slope matrices and
$\Sigma_u$ are exactly sparse (i.e., $\mu=\mu_u=0$), and the innovation process
has sub-Gaussian tails so that $\nu_n$ and $\tilde{\nu}_n$ are of order $\log(d)$.
Then both
$\|\widehat{\Sigma}_u^{(\mathrm{re})}-\Sigma_u\|_{\infty}$ and
$\|\widehat{\Sigma}_{UW}^{(\mathrm{re})}-\Sigma_{UW}\|_{\infty}$ achieve the
rate $O_p\big(\sqrt{\log(d)/n}\big)$.

\begin{condition}\label{cond1_de_2S}
\begin{flalign*}
(i)\quad &k_{W}\tilde \nu_n^{1/2}\Big(
k_U k_A^{3.5}\Big(\nu_n/n\Big)^{(1-\mu)/2}
+
k_A k_U\delta_n^{1-\mu_u}
\Big)=o(1);&\\
(ii)\quad&
k_{W}^2\Big(
k_U k_A^{3.5}\Big(\nu_n/n\Big)^{(1-\mu)/2}
+
k_A k_U\delta_n^{1-\mu_u}
\Big)=o(1);\\
(iii)\quad&
k_{W}^2\Big(
k_A^{4.5}\Big(\nu_n/n\Big)^{(1-\mu)/2}
\Big)=o(1).
\end{flalign*}
\end{condition}

Condition~\ref{cond1_de_2S} summarizes the growth restrictions needed for (a) the
$\sqrt{n}$-normalized de-biasing representation to be asymptotically linear and
(b) the plug-in standard error to be consistent. Part (i) enforces that the
dominant first-stage nuisance error entering the de-biasing correction vanishes
after scaling by $k_W$ and the stochastic fluctuation level $\tilde\nu_n^{1/2}$;
equivalently, it implies
$k_W \tilde\nu_n^{1/2}\|\widehat{\Sigma}_{UW}^{(\mathrm{re})}-\Sigma_{UW}\|_{\infty}
=o_p(1)$ up to the rate in Lemma~\ref{theo_consistency_deLS}(iii). Parts (ii) and
(iii) ensure that the estimation error in the asymptotic variance is negligible.
These restrictions control the effect of using
$\widehat{\Sigma}_{UW}^{(\mathrm{re})}$ and the feasible long-run variance
estimators $\widehat{\Omega}_{U,h}^{(hac)}$ or $\widehat{\Omega}_{U,h}^{(hc)}$ in
place of their population counterparts. In the benchmark case discussed above
with fixed $k_W$, the conditions reduce to standard requirements such as
$\log(d)/\sqrt{n}\to 0$ (and analogous rate restrictions implied by the remaining
terms).

We now state the main inference result.

\begin{theorem}[Inference for the de-2S estimator]
\label{theo_normality_de2S}
Under Assumptions \ref{Assump_deLS_de2S}, suppose Condition~\ref{cond1_de_2S}
holds. Then, for any $v\in\mathbb{R}^p$ with $\|v\|_1=1$,
\begin{equation}
\label{eq:de2S_studentized_hac}
\frac{\sqrt{n}\, v'(\hat \beta_{1,h}^{(\mathrm{de\text{-}2S})}-\beta_{1,h})}
{\widehat{s.e.}_{\hat\beta_{1,h}^{(\mathrm{de\text{-}2S})}}^{(\mathrm{hac})}(v)}
\xrightarrow{d} \mathcal{N}(0,1),
\end{equation}
where
$\widehat{s.e.}_{\hat\beta_{1,h}^{(\mathrm{de\text{-}2S})}}^{(\mathrm{hac})}(v)^2
:=
v'\widehat{\operatorname{AVar}}^{(\mathrm{hac})}\!\left(
\sqrt{n}\hat \beta_{1,h}^{(\mathrm{de\text{-}2S})}
\right)v$.
Moreover, if the VAR innovations $u_t$ satisfy Assumptions~\ref{cond_HC}, the same
limit result holds with
$\widehat{s.e.}^{(\mathrm{hac})}$ replaced by
$\widehat{s.e.}^{(\mathrm{hc})}$, where
$\widehat{s.e.}_{\hat\beta_{1,h}^{(\mathrm{de\text{-}2S})}}^{(\mathrm{hc})}(v)^2
:=
v'\widehat{\operatorname{AVar}}^{(\mathrm{hc})}\!\left(
\sqrt{n}\hat \beta_{1,h}^{(\mathrm{de\text{-}2S})}
\right)v$.
\end{theorem}

See Online Appendix for the proof. The HAC standard error is robust to general serial dependence in the innovation
process under the mixing and moment conditions in Assumption
\ref{Assump_deLS_de2S}. As an alternative, the heteroskedasticity-robust standard
error in \eqref{hc_de-2S} avoids long-run variance estimation but requires the
stronger conditions on $\{u_t\}$ summarized in Assumptions~\ref{cond_HC}.

\section{Monte Carlo simulations}
\label{sec7mc}

This section reports a Monte Carlo study assessing the finite-sample performance
of the proposed de-biased two-stage estimation and inference procedures for
high-dimensional local projections. The experiments focus on the size properties
of Wald tests for multi-horizon Granger-causal parameters under empirically
relevant departures from sparsity.

We consider vector autoregressive models of order $p=2$. Stationarity is imposed
via a factorization of the characteristic polynomial: two root matrices
$\{\Lambda_k\}_{k=1}^2$ are constructed and the VAR slope matrices are recovered as
$A_1=\Lambda_1+\Lambda_2$ and $A_2=-\Lambda_1\Lambda_2$, ensuring that all
eigenvalues of the companion matrix lie strictly inside the unit circle. The
innovations $\{u_t\}$ are i.i.d.\ Gaussian,
$u_t\sim\mathcal N(0,\Sigma_u)$, with $\Sigma_{u,ij}=0.5^{|i-j|}$. We consider
$(d,T)\in\{(60,120),(60,240)\}$. Under the VAR$(2)$ specification, the number of
regressors in the associated local projection equations is of the same order as
the sample size, rendering conventional OLS-based inference invalid.

Two designs for the root matrices are considered. In the first design, the roots
are tridiagonal, with $\Lambda_{ij,k}=\rho^{|i-j|+1}$ for $|i-j|\le q$, where
$q=3$ and $\rho=0.3$, and zero otherwise; the largest eigenvalue of the VAR companion
matrix equals $0.549$. In the second design, the roots are upper triangular, with
$\Lambda_{ij,k}=0.1$ if $0\le j-i\le q$ for $q=3$, and zero otherwise; the
corresponding largest eigenvalue of the VAR companion
matrix equals $0.147$.

Both designs are sparse at horizon one, but sparsity deteriorates rapidly as the
projection horizon increases due to dynamic propagation. In the tridiagonal
design, repeated matrix multiplication activates an expanding set of small but
non-negligible coefficients, rendering local projection equations increasingly
dense. In the upper-triangular design, the one-sided propagation structure induces
even faster support expansion, with the number of nonzero coefficients in the
$h$-step companion matrix growing at rate proportional to $qh$. As a result,
approximate sparsity breaks down at moderate horizons in both designs.

We conduct $1{,}000$ Monte Carlo replications. In each replication and for each
horizon $h=1,\ldots,24$, we compare four estimators: post--double-selection LASSO
with HAC inference; de-biased LASSO with HAC inference; the proposed de-biased
two-stage estimator with HAC inference; and the proposed de-biased two-stage
estimator with heteroskedasticity-robust inference. For all procedures relying on
HAC inference, long-run variance matrices are estimated using a Bartlett kernel
with bandwidth equal to the projection horizon $h$. Regularization parameters are
selected using BIC-type criteria.

Figures~\ref{figure_Tridiag}--\ref{figure_UpTri} report empirical rejection
frequencies of 5\% Wald tests for the two designs, computed as one minus the
empirical coverage probabilities of the corresponding 95\% confidence sets. Left
(right) panels correspond to $(d,T)=(60,120)$ ($(60,240)$). Ejection frequencies are based on the Wald statistic
\begin{align}
W_h
:=
(T-h)(\hat{\boldsymbol\beta}_{1,h}-\boldsymbol\beta_{1,h})'
\hat{\Omega}_h^{-1}
(\hat{\boldsymbol\beta}_{1,h}-\boldsymbol\beta_{1,h}),
\end{align}
where $\hat{\boldsymbol\beta}_{1,h}$ denotes the estimated vector of
Granger-causal coefficients at horizon $h$. The covariance matrix
$\hat{\Omega}_h$ is the corresponding asymptotic variance estimates. Under the null hypothesis,
$W_h \xrightarrow{d} \chi^2_q$, where $q=2$ is the number of restrictions in our VAR(2) simulations.

\begin{figure}[!ht]
    \centering
    \includegraphics[width=1 \textwidth]{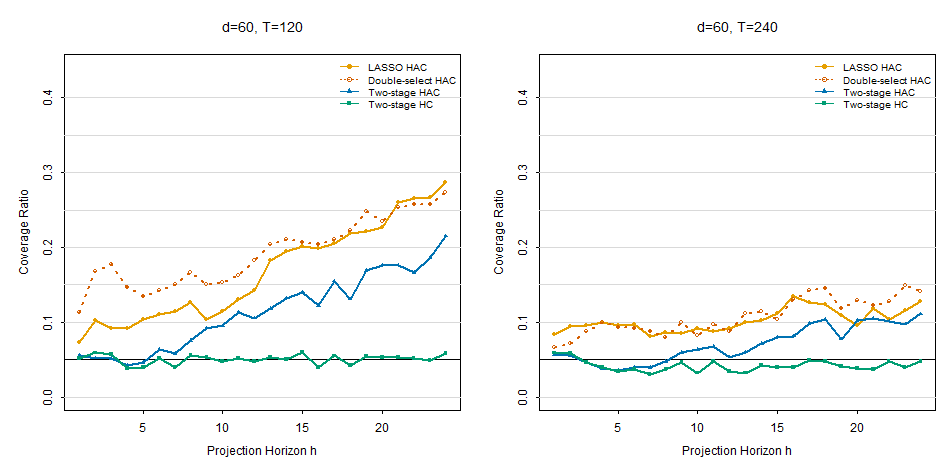}
    \caption{Empirical size of joint horizon-$h$ Granger-causality tests at the 5\% nominal level under the tridiagonal root design.}
    \label{figure_Tridiag}
\end{figure}

\begin{figure}[!ht]
    \centering
    \includegraphics[width=1 \textwidth]{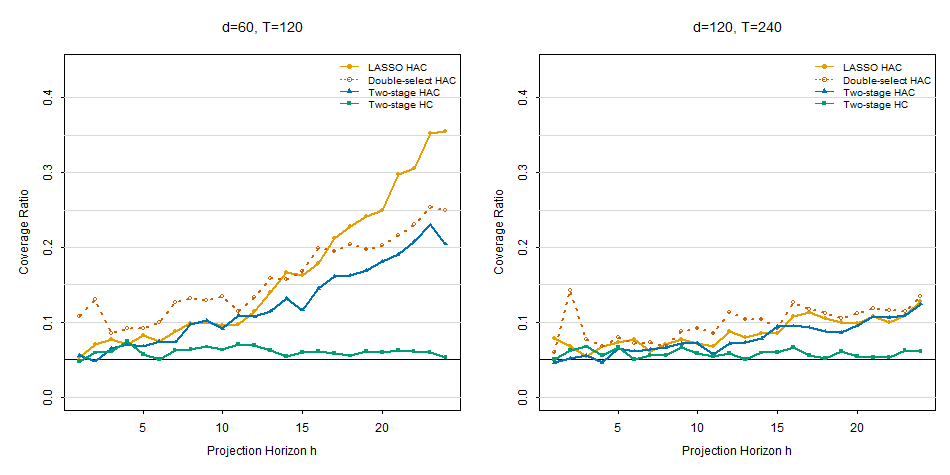}
    \caption{Empirical size of joint horizon-$h$ Granger-causality tests at the 5\% nominal level under the upper-triangular root design.}
    \label{figure_UpTri}
\end{figure}

Both de-biased LASSO and post--double-selection LASSO exhibit substantial size
distortions across horizons, reflecting violations of the sparsity conditions
required for valid de-biasing and variable selection. Although the VAR dynamics
are sparse at short horizons, local projection equations become progressively
dense as the horizon increases, placing these methods outside their intended
high-dimensional regimes.

Across both designs and sample sizes, the proposed de-biased two-stage estimator
combined with heteroskedasticity-robust inference delivers accurate and stable
size control, with rejection frequencies remaining close to the nominal level
uniformly across horizons. In contrast, procedures relying on HAC variance
estimation display increasing distortions at longer horizons, consistent with
documented finite-sample failures of HAC estimators in long-horizon local
projections \cite{montiel2021local,xu2023local,wang2024}. Increasing the sample
size mitigates distortions for all methods. Overall, the results indicate that
the proposed de-biased two-stage procedure provides reliable finite-sample
inference for multi-horizon Granger non-causality testing in high-dimensional
VARs.

%%%%%%%%%%%%%%%%%%%%%%%%%%%%%%%%%%%%%%%%%%%%%%
\section{Empirical application}
\label{sec8empirical}

We apply the proposed methodology to study volatility transmission and network
connectedness in U.S.\ equity markets. A large literature documents that shocks
to firm-level volatility propagate across assets and sectors, generating
system-wide risk interdependence; see, among others,
\cite{diebold2014network,mcaleer2008realized,hecq2023granger,miao2023high}. Our
contribution is to characterize how such spillovers unfold across multiple
forecast horizons in a high-dimensional setting.

We use daily realized variance data for 30 large-cap U.S.\ equities,
corresponding to constituents of the Dow Jones Industrial Average and spanning
the main sectors of the U.S.\ economy.\if1\anon
{\footnote{We thank a colleague for providing the 10-minute realized variance data used in \cite{hecq2023granger}.}
}
\fi
\if0\anon
{\footnote{We thank a colleague for providing the data used in this study.}
}
\fi
Realized variances are constructed from intraday
10-minute returns and log-transformed to stabilize dispersion. The sample spans
March 2008 to February 2017, comprising 2{,}236 trading days.
\begin{figure}[ht]
    \centering
    \includegraphics[width=1 \textwidth]{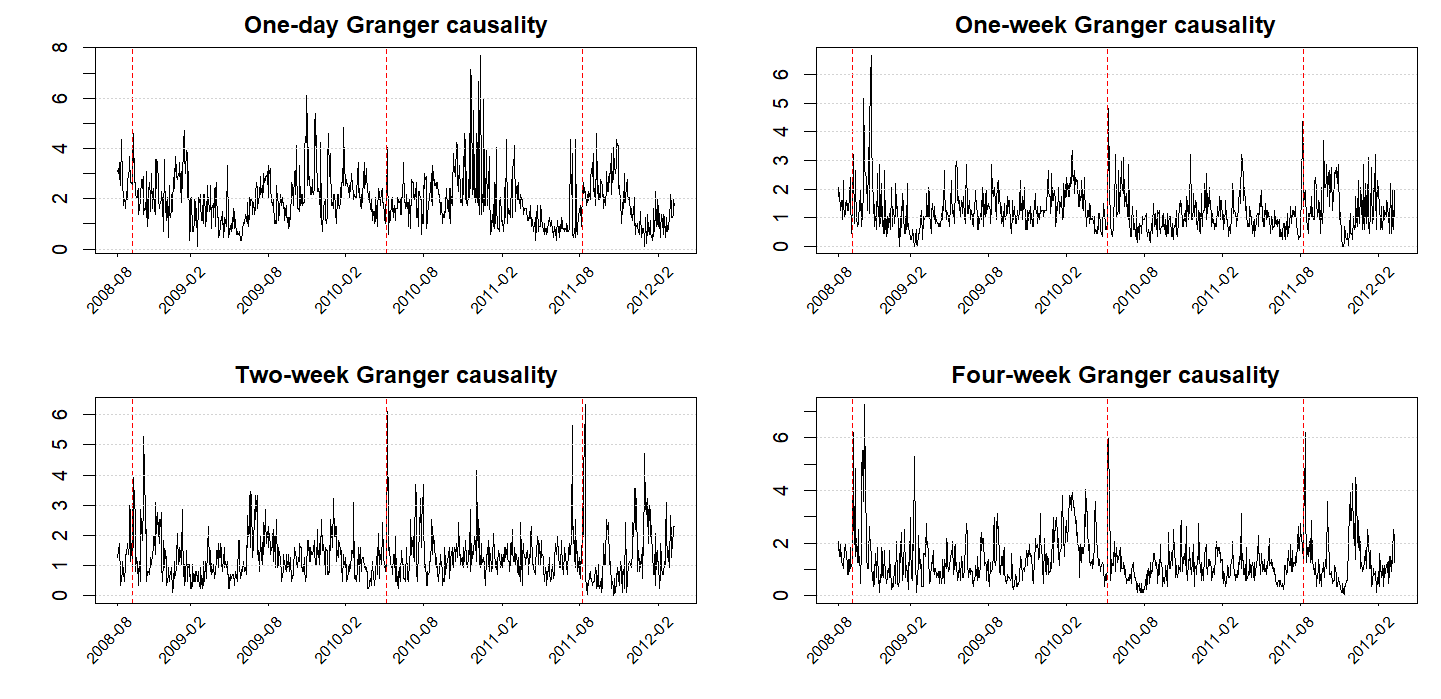}
    \caption{Rolling-window estimates (100 trading days) of total connectedness measured by DGC across horizons $h$ at $\alpha=0.99$.}
    \label{figure_TC99}
\end{figure}

\begin{figure}[ht!]
    \centering
    \includegraphics[width=1 \textwidth]{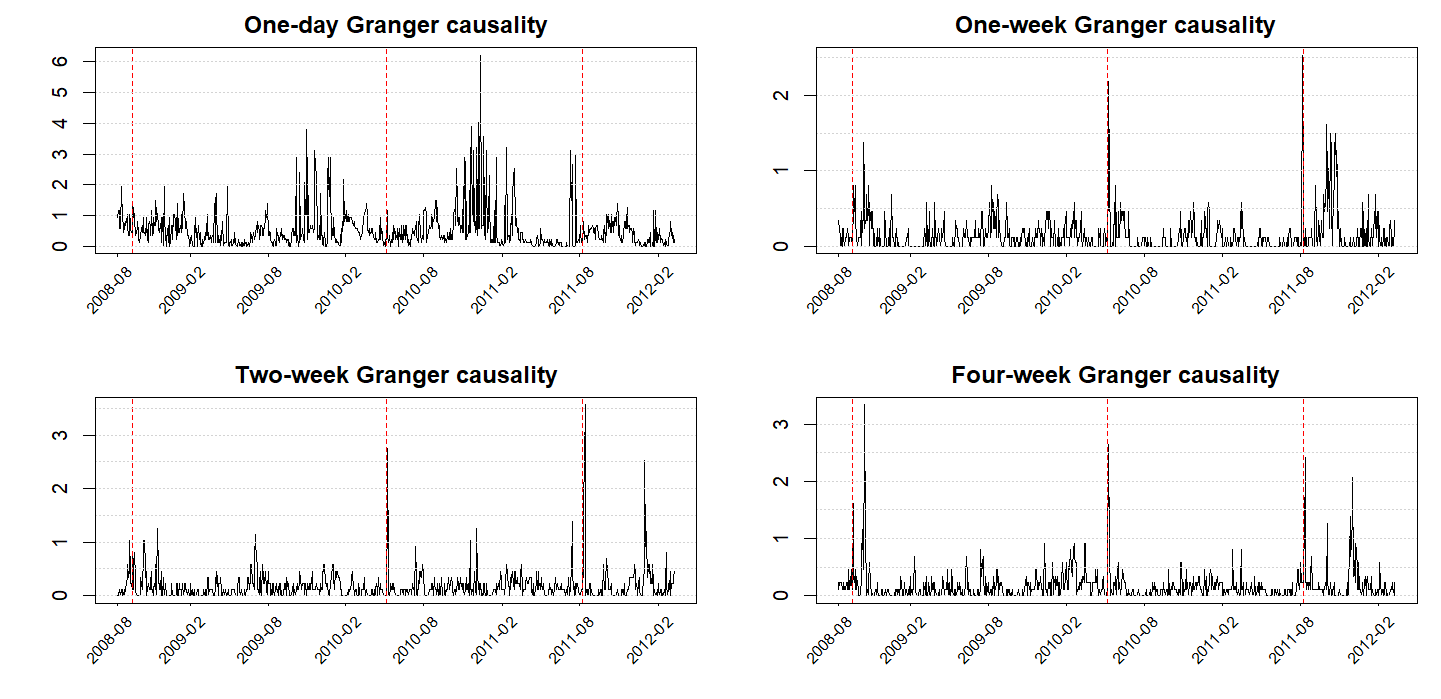}
    \caption{Rolling-window estimates (100 trading days) of total connectedness measured by DGC across horizons $h$ at $\alpha=0.999$.}
    \label{figure_TC999}
\end{figure}

We construct directed networks of predictive relations using multi-horizon
Granger-causality tests. For each horizon
$h\in\{1,5,10,20\}$ and each ordered pair $(i,j)$ with $i\neq j$, we test the null
that asset $j$ does not Granger-cause asset $i$ at horizon $h$. A directed link
is recorded whenever the null is rejected at a fixed significance level. The
resulting adjacency matrix captures the extensive margin of connectedness.
System-wide connectedness is summarized by the degree of Granger causality
$\mathrm{DGC}(h;\alpha)$ defined in \eqref{dgch}, evaluated at
$\alpha=0.99$ and $\alpha=0.999$.

\begin{figure}[ht]
    \centering
    \includegraphics[width=0.9 \textwidth]{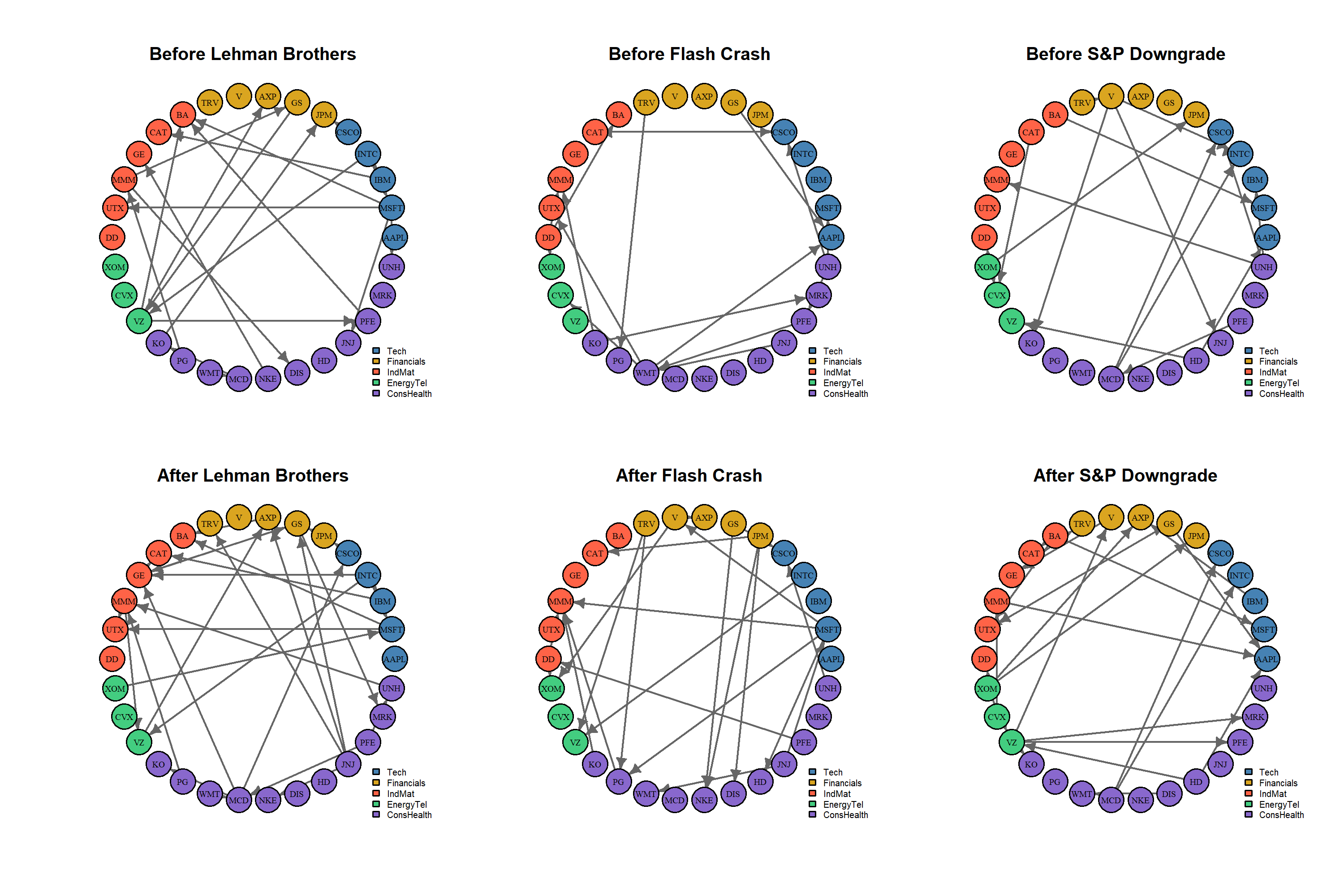}
    \caption{One-day Granger causality networks around major financial shocks. }
    \label{figure_totalbah1}
\end{figure}

\begin{figure}[ht!]
    \centering
    \includegraphics[width=0.9 \textwidth]{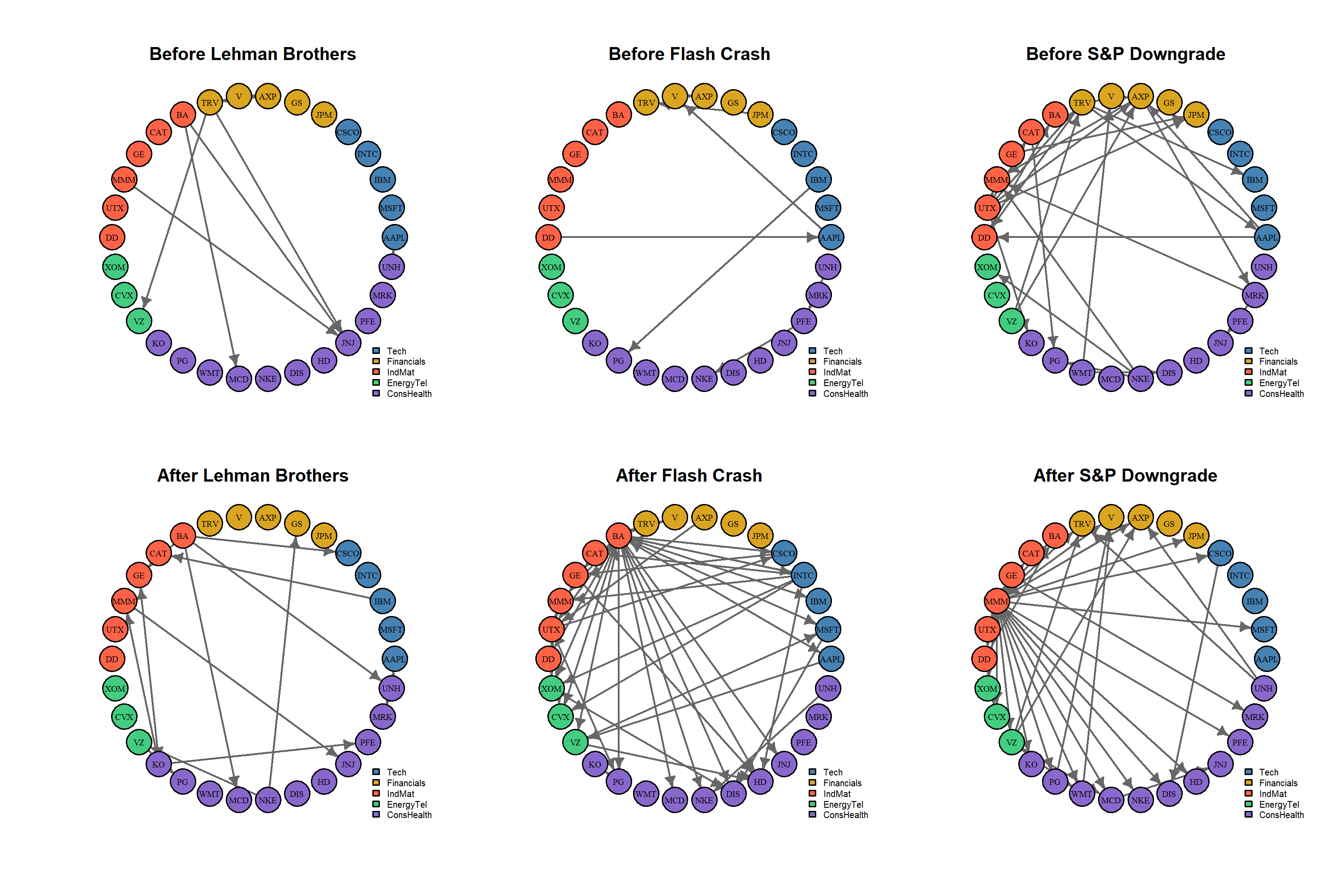}
    \caption{Two-week Granger causality networks around major financial shocks.}
    \label{figure_totalbah10}
\end{figure}

Figures~\ref{figure_TC99} and \ref{figure_TC999} report rolling-window estimates
of $\mathrm{DGC}(h;\alpha)$ based on 100-day windows. We focus on the period from
August 2008 to February 2012, covering three major market events: the Lehman
Brothers bankruptcy (2008--09--15), the Flash Crash (2010--05--06), and the S\&P
sovereign downgrade (2011--08--05), indicated by vertical lines. Within each
window, we estimate a sparse VAR(4) via adaptive LASSO and conduct
horizon-specific Granger-causality tests.

Two main patterns emerge. First, volatility connectedness is highly
state-dependent and rises sharply during periods of market stress. All three
events coincide with pronounced spikes in $\mathrm{DGC}(h;\alpha)$, indicating
a rapid densification of the volatility transmission network. Tightening the
significance level from $\alpha=0.99$ to $\alpha=0.999$ sharpens these spikes,
isolating episodes of acute systemic connectedness, while looser thresholds
capture more persistent but weaker dependence.

Second, the horizon dimension is central to identifying economically meaningful
spillovers. One-day-ahead connectedness responds only modestly around crisis
events, whereas multi-day horizons exhibit large
and persistent increases. This suggests that short-horizon measures understate
the importance of slower-moving propagation mechanisms, while longer horizons
capture cumulative volatility transmission. Notably, a pronounced one-day spike
in late 2010 is not associated with a major systemic event and disappears at
longer horizons, underscoring the value of multi-horizon analysis for filtering
spurious signals.

To visualize these dynamics, Figures~\ref{figure_totalbah1}--\ref{figure_totalbah10}
plot directed networks before and after each event at the one-day and two-week
horizons. Firms are grouped by sector and color-coded accordingly. While
one-day networks change little around events, two-week networks become markedly
denser following each shock, indicating broad-based amplification of volatility
spillovers across sectors. Our results highlight that equity-market volatility transmission
operates significantly through multi-day propagation channels. Crisis episodes are characterized by sharp increases in
predictive interdependence, with individual stocks becoming more informative
about future volatility dynamics of others. Monitoring horizon-specific
Granger-causal linkages therefore provides a financially meaningful and
complementary measure of market connectedness.

\section{Conclusion}
\label{sec9con}

We develop estimation and inference methods for multi-horizon Granger causality in high-dimensional dynamic systems. A key insight is that sparsity of the underlying VAR representation does not imply sparsity of local-projection coefficients at horizons $h>1$, because powers of the VAR transition matrix typically produce dense horizon-specific LP coefficients. This disconnect undermines direct application of high-dimensional regularization and post-selection methods to LP regressions.

We exploit sparsity at the VAR level and derive explicit analytic mappings to recover horizon-specific LP coefficients. Based on Neyman-orthogonal scores, we construct a de-biased two-stage estimator and establish asymptotic normality for low-dimensional multi-horizon Granger-causal parameters with growing system dimension. Our Wald tests are valid under approximate sparsity and weak dependence even as the number of variables increases with the sample size.

We also propose a heteroskedasticity-robust inference approach that obviates long-run variance estimation. By re-indexing score equations and imposing mild structural conditions on the innovation process, this route eliminates HAC corrections while preserving first-order validity. Monte Carlo evidence shows stable size control and improved finite-sample performance at long horizons relative to conventional HAC inference.

An empirical application to realized-volatility networks demonstrates that horizon-specific connectedness measures reveal propagation patterns that one-step analyses miss, particularly around market stress episodes. The findings underscore the value of multi-horizon causality measures for characterizing dynamic interdependence in large financial systems.

\if1\anon
\section*{Acknowledgments}
We are extremely grateful to Marine Carrasco, Benoit Perron, Mathieu Marcoux,
Jean-Marie Dufour, Carsten Trenkler, and Victoria Zinde-Walsh for helpful discussions and guidance.
We also thank Ren\'e Garcia, Prosper Dovonon, and participants in the
CIREQ Econometrics Conference in Honor of Eric Ghysels,
the 2024 NBER--NSF Time Series Conference,
the International Association for Applied Econometrics 2025 Conference,
and the Dagenais Econometrics Seminars
for constructive comments.
This work was supported by the Fonds de recherche du Qu\'ebec -- Soci\'et\'e et culture (FRQSC)
and the research funds provided by the University of Mannheim.
\fi
%References
% \bibliographystyle{abbrvnat}
% \bibliography{references}
\bibliographystyle{agsm}
\bibliography{references}

%###########################################################
\clearpage
\newpage
\pagenumbering{arabic}\setcounter{page}{1}
\appendix

% ---------- numbering: Sections A.1, A.2, ... ----------
\setcounter{section}{0}
\renewcommand{\thesection}{A.\arabic{section}}
\renewcommand{\thesubsection}{A.\arabic{section}.\arabic{subsection}}

% ---------- figures/tables ----------
\renewcommand{\thefigure}{A.\arabic{figure}}
\setcounter{figure}{0}
\renewcommand{\thetable}{A.\arabic{table}}
\setcounter{table}{0}

% ---------- theorem-style environments in appendix ----------
% (Keep only the ones you actually defined in the main preamble.)
\setcounter{theorem}{0}
\renewcommand{\thetheorem}{A.\arabic{theorem}}

\setcounter{proposition}{0}
\renewcommand{\theproposition}{A.\arabic{proposition}}

% ---------- proof style (cleaner spacing; consistent look) ----------
\makeatletter
\renewenvironment{proof}[1][\proofname]{\par\pushQED{\qed}\normalfont
  \topsep6\p@\@plus6\p@\relax
  \trivlist\item[\hskip\labelsep\itshape #1.]\ignorespaces
}{\popQED\endtrivlist\@endpefalse}
\makeatother

%==========================================================
\begin{center}
{\Large \bf Online Appendix}

\vspace{0.3cm}

{\large to ``Sparse VARs Do Not Imply Sparse Local Projections:
Robust Inference for High-Dimensional Granger Causality''}

\vspace{0.45cm}
\if1\anon
{
{\large Eug\`ene Dettaa \qquad Endong Wang}
}
\fi
\if0\anon
{
}
\fi

\end{center}

%==========================================================
\section{Guide to the online appendix}\label{sec:OA_guide}

This online appendix is organized as follows.
Section~\ref{sec:OA_main} presents the proofs of the main results in the order they appear in the paper.
Section~\ref{sec:OA_lemmas} gathers auxiliary lemmas used repeatedly in the main proofs.
Section~\ref{sec:OA_cltvar} collects the technical proofs for lemmas. For notational convenience, we omit the superscript \textup{(re)} for regularized estimators throughout the appendix.
For example, $\widehat{\mathbf A}$ denotes $\widehat{\mathbf A}^{\textup{(re)}}$.
The symbol $C$ denotes a generic positive constant whose value may change from line to line.

We use the following abbreviations: T (triangle inequality), CS (Cauchy--Schwarz inequality),
LIE (law of iterated expectations), and m.d.s.\ (martingale difference sequence).

We repeatedly use the following elementary inequalities. For any compatible matrices $B_1,B_2,U,V$ and any vector $x$,
\begin{align*}
& \|B_1'\|_{\max}=\|B_1\|_{\max},
\qquad \|B_1'\|_{\infty}=\|B_1\|_{1}, \\
& \|B_1B_2\|_{\max}\le \|B_1\|_{\infty}\|B_2\|_{\max},
\qquad \|B_1x\|_{\ell}\le \|B_1\|_{\ell}\|x\|_{\ell}, \\
& \|B_1B_2\|_{\ell}\le \|B_1\|_{\ell}\|B_2\|_{\ell},
\qquad \ell\in\{1,2,\infty\},\\
& \|UB_1V\|_2=\|B_1\|_2 \quad \text{if } \|U\|_2=\|V\|_2=1 .
\end{align*}

%==========================================================
\section{Proofs of main results}\label{sec:OA_main}

\begin{proof}[\textbf{Proof of Proposition} \ref{proposition:uncorr}]
\label{proof:proposition:uncorr}
We show that Assumption \ref{cond_HC} implies that $\{s_t\}$ is serially uncorrelated. Recall that
$
s_t=\big(e_{t,h},e_{t+1,h},\ldots,e_{t+p-1,h}\big)\otimes u_t,
$
so for $t\neq \tau$,
\begin{align}
\mathbb E[s_t s_\tau']
&=\mathbb E\!\left[\big(e_{t,h},\ldots,e_{t+p-1,h}\big)\big(e_{\tau,h},\ldots,e_{\tau+p-1,h}\big)'\otimes u_tu_\tau'\right].
\end{align}
Hence, $\mathbb E[s_t s_\tau']=\mathbf 0$ is equivalent to
\begin{align}
\label{eq:uncorr_reduce}
\mathbb E\!\left[u_tu_\tau' \otimes e_{t+i,h}e_{\tau+j,h}\right]=\mathbf 0,
\qquad i,j=0,1,\ldots,p-1.
\end{align}

Next, use the linear representation $e_{t,h}=\sum_{m=1}^h v_1'\Psi_{h-m}u_{t+m}$. For any $i,j\in\{0,\ldots,p-1\}$,
\begin{align}
\label{eq:expand_e}
\mathbb E\!\left[u_tu_\tau' \otimes e_{t+i,h}e_{\tau+j,h}\right]
&=\sum_{m=1}^h\sum_{n=1}^h 
\mathbb E\!\left[u_tu_\tau' \otimes \Big(v_1'\Psi_{h-m}u_{t+i+m}u_{\tau+j+n}'\Psi_{h-n}'v_1\Big)\right].
\end{align}

Without loss of generality, assume $t<\tau$ and fix $(m,n,i,j)$. Consider three cases. First, if $t+i+m=\tau+j+n$, then $t+i+m>\tau>t$, and Assumption \ref{cond_HC}(ii) yields
\begin{align}
\mathbb E\!\left[u_tu_\tau' \otimes \Big(v_1'\Psi_{h-m}u_{t+i+m}u_{\tau+j+n}'\Psi_{h-n}'v_1\Big)\right]=0.
\end{align}
Second, if $t+i+m<\tau+j+n$, then $\tau+j+n>\max\{t+i+m,\tau,t\}$, and Assumption \ref{cond_HC}(i) together with LIE implies the same term equals $0$. Third, if $t+i+m>\tau+j+n$, then $t+i+m>\tau+j+n>\tau>t$, and again Assumption \ref{cond_HC}(i) and LIE imply the term equals $0$.

In all cases, each summand in \eqref{eq:expand_e} is zero, so \eqref{eq:uncorr_reduce} holds, and therefore $\mathbb E[s_t s_\tau']=\mathbf 0$ for all $t\neq \tau$. Hence $\{s_t\}$ is serially uncorrelated. It follows that the long-run variance
$
\Omega_{U,h}:=\sum_{k=-\infty}^{\infty}\mathbb E[s_t s_{t+k}']
$
reduces to the variance,
$
\Omega_{U,h}=\mathbb E[s_t s_t']=\mathrm{Var}(s_t).
$
\end{proof}

\begin{proof}[\textbf{Proof of Lemma \ref{theo_consistency_deLS}}]
We first establish the rate in part (i). Write
\begin{align}\label{eq:Sigma_u_decomp}
\begin{split}
\big\|\widehat{\Sigma}_u-\Sigma_u\big\|_{\max}
&=
\left\|
\frac{1}{n-p}\sum_{t=p+1}^n \hat u_t \hat u_t' - \Sigma_u
\right\|_{\max} \\
&\le
\left\|
\frac{1}{n-p}\sum_{t=p+1}^n (\hat u_t-u_t)(\hat u_t-u_t)'
\right\|_{\max} \\
&\quad+2
\left\|
\frac{1}{n-p}\sum_{t=p+1}^n (\hat u_t-u_t)u_t'
\right\|_{\max} +
\left\|
\frac{1}{n-p}\sum_{t=p+1}^n u_tu_t'-\Sigma_u
\right\|_{\max}.
\end{split}
\end{align}
Hence,
\[
\|\widehat{\Sigma}_u-\Sigma_u\|_{\max}\le I_1+2I_2+I_3,
\]
where $I_1,I_2,I_3$ denote the four terms above in obvious order. By Assumption
\ref{Assump_deLS_de2S}(iv), $I_3=O_p\big(\sqrt{\tilde{\nu}_n/n}\big)$.

Using $u_t=w_t-J\mathbf A W_{t-1}$ and $\hat u_t=w_t-J\widehat{\mathbf A}W_{t-1}$, we have
$\hat u_t-u_t=J(\mathbf A-\widehat{\mathbf A})W_{t-1}$. Lemma \ref{lemma1} implies
\[
\Big\|
\frac{1}{n-p}\sum_{t=p+1}^n W_{t-1}W_{t-1}'
\Big\|_{\max}
=O_p(1),
\qquad
\Big\|
\frac{1}{n-p}\sum_{t=p+1}^n W_{t-1}u_t'
\Big\|_{\max}
=O_p\big(\sqrt{\tilde{\nu}_n/n}\big).
\]

For $I_1$, using submultiplicativity and $\|J\|_\infty=1$,
\[
I_1
\le
\|\widehat{\mathbf A}-\mathbf A\|_\infty
\|\widehat{\mathbf A}-\mathbf A\|_1
\Big\|
\frac{1}{n-p}\sum_{t=p+1}^n W_{t-1}W_{t-1}'
\Big\|_{\max}
=
O_p\big(\|\widehat{\mathbf A}-\mathbf A\|_\infty^2\big),
\]
where Assumption \ref{Assump_deLS_de2S}(iii) is used. Similarly,
\[
I_2
\le
\|\widehat{\mathbf A}-\mathbf A\|_\infty
\Big\|
\frac{1}{n-p}\sum_{t=p+1}^n W_{t-1}u_t'
\Big\|_{\max}
=
O_p\left(\|\widehat{\mathbf A}-\mathbf A\|_\infty\sqrt{\tilde{\nu}_n/n}\right).
\]

Combining the above bounds yields
\[
\|\widehat{\Sigma}_u-\Sigma_u\|_{\max}
=
O_p\left(
k_A^3(\nu_n/n)^{1-\mu}
+ k_A^{1.5}(\nu_n/n)^{(1-\mu)/2}\sqrt{\tilde{\nu}_n/n}+
\sqrt{\tilde{\nu}_n/n}
\right)
=:O(\delta_n).
\]

The thresholded estimator $\widehat{\Sigma}_u^{(\mathrm{re})}=T_{\tau_u}(\widehat{\Sigma}_u)$
is constructed with $\tau_n\asymp\delta_n$. Under Assumption
\ref{Assump_deLS_de2S}(vi), $\Sigma_u\in\mathcal U(k_U,\mu_u)$. For each row $i$,
\[
\sum_{j=1}^d|\widehat{\Sigma}^{(\mathrm{re})}_{u,ij}-\sigma_{u,ij}|
\le
\sum_{j:\,|\widehat{\Sigma}_{u,ij}|\ge\tau_n}|\widehat{\Sigma}_{u,ij}-\sigma_{u,ij}|
+
\sum_{j:\,|\widehat{\Sigma}_{u,ij}|<\tau_n}|\sigma_{u,ij}|
\lesssim
k_U\tau_n^{1-\mu_u}.
\]
Consequently,
\[
\|\widehat{\Sigma}_u^{(\mathrm{re})}-\Sigma_u\|_\infty
=
O_p\big(k_U\delta_n^{1-\mu_u}\big).
\]
If a positive-definite adjustment is applied to enforce nonsingularity,
this rate is preserved under $\lambda_{\min}(\Sigma_u)\ge C>0$, since the
adjustment only modifies $\widehat{\Sigma}_u^{(\mathrm{re})}$ on a set where
its smallest eigenvalue is below a fixed constant and does not amplify the
entrywise estimation error. In particular, all subsequent bounds involving
$(\widehat{\Sigma}_u^{(\mathrm{re})})^{-1}$ remain valid.

The proof of part (ii) relies on auxiliary bounds for $\hat\Psi_j-\Psi_j$ and
$\sum_{m\ge0}(\hat\Psi_m-\Psi_m)$. Under Assumption \ref{Assump_deLS_de2S}(ii),
$\sum_{m=0}^{\infty}\|\mathbf A^m\|_{\infty}=O(k_A)$. By Assumption
\ref{Assump_deLS_de2S}(iii),
$\|\widehat{\mathbf A}-\mathbf A\|_{\infty}
=O_p\big(k_A^{1.5}(\nu_n/n)^{(1-\mu)/2}\big)$, which implies
$k_A\|\widehat{\mathbf A}-\mathbf A\|_{\infty}=o_p(1)$. Together with stability of
$\mathbf A$, this ensures that, with probability approaching one, $\widehat{\mathbf A}$
is stable and satisfies $\|\widehat{\mathbf A}^m\|_{\infty}\le Ck_A\tilde\varphi^{\,m}$
for all $m\ge0$ and some $\tilde\varphi\in(\varphi,1)$. Consequently,
$\sum_{m=0}^{\infty}\|\widehat{\mathbf A}^m\|_{\infty}=O_p(k_A)$.

Let $j\in\mathbb N$ be fixed and sample-independent. Using the identity
$\widehat{\mathbf A}^j-\mathbf A^j
=\sum_{s=0}^{j-1}\widehat{\mathbf A}^{s}(\widehat{\mathbf A}-\mathbf A)\mathbf A^{j-1-s}$,
together with $\hat\Psi_j=J\widehat{\mathbf A}^jJ'$, $\Psi_j=J\mathbf A^jJ'$ and
$\|J\|_\infty=1$, we obtain
\begin{align}
\begin{split}
\|\hat\Psi_j-\Psi_j\|_\infty
&\le
\|\widehat{\mathbf A}-\mathbf A\|_\infty
\sum_{s=0}^{j-1}\|\widehat{\mathbf A}^{s}\|_\infty
\|\mathbf A^{j-1-s}\|_\infty  \\
&=
O_p\big(k_A^2\|\widehat{\mathbf A}-\mathbf A\|_\infty\big)
=
O_p\left(k_A^{3.5}\left(\nu_n/n\right)^{(1-\mu)/2}\right).
\end{split}
\end{align}

The same argument applied to $\mathbf A'$ yields
$\sum_{m=0}^{\infty}\|(\mathbf A')^m\|_{\infty}=O(k_A)$ and 
$\|\hat\Psi_j'-\Psi_j'\|_{\infty}
=O_p\big(k_A^{3.5}(\nu_n/n)^{(1-\mu)/2}\big)$.

Using the explicit representation
$\Sigma_{UW}$, we have
\[
\|\widehat{\Sigma}_{UW}^{(\mathrm{re})}-\Sigma_{UW}\|_{\infty}
\lesssim
\sum_{j=0}^{p-1}
\|\widehat{\Sigma}_u^{(\mathrm{re})}\hat\Psi_j'-\Sigma_u\Psi_j'\|_{\infty}.
\]
Assumption \ref{Assump_deLS_de2S}(v) implies $\|\Sigma_u\|_\infty=O(k_U)$, and part (i)
yields
$\|\widehat{\Sigma}_u^{(\mathrm{re})}-\Sigma_u\|_\infty
=O_p\big(k_U\delta_n^{1-\mu_u}\big)$. Since $p$ is fixed, it suffices to bound a
generic summand, for which
\begin{align}
\begin{split}
\big\|\widehat{\Sigma}_u^{(\mathrm{re})}\hat\Psi_j'
-
\Sigma_u\Psi_j'\big\|_{\infty}
\;&\le\;
\big\|\widehat{\Sigma}_u^{(\mathrm{re})}-\Sigma_u\big\|_{\infty}
\big\|\hat\Psi_j-\Psi_j\big\|_{1} \\
&\quad+
\big\|\widehat{\Sigma}_u^{(\mathrm{re})}-\Sigma_u\big\|_{\infty}
\big\|\Psi_j\big\|_{1}
+
\big\|\Sigma_u\big\|_{\infty}
\big\|\hat\Psi_j-\Psi_j\big\|_{1},
\end{split}
\end{align}

Substituting the above bounds gives
\begin{align}
\begin{split}
\|\widehat{\Sigma}_{UW}^{(\mathrm{re})}-\Sigma_{UW}\|_{\infty}
=
O_p\Big(
&k_A^{3.5}\left(\nu_n/n\right)^{(1-\mu)/2}k_U\delta_n^{1-\mu_u}+
k_Ak_U\delta_n^{1-\mu_u}
+
k_Uk_A^{3.5}\left(\nu_n/n\right)^{(1-\mu)/2} 
\Big).
\end{split}
\end{align}

\end{proof}
%########################################################### Theorem Proof
\begin{proof}[\textbf{Proof of Theorem \ref{theo_normality_de2S}}]

Let $G := R_1 \Sigma_{UW}^{-1}$. By Assumption \ref{Assump_deLS_de2S}(\textit{vii}),
$\Omega_{U,h}$ is positive definite with $\lambda_{\min}(\Omega_{U,h}) \ge C^{-1}$.
For any nonzero $v\in\mathbb R^p$,
\begin{align}\label{eq:lower2}
\begin{split}
v' G \Omega_{U,h} G' v
&\ge
\lambda_{\min}(\Omega_{U,h}) \, \|G'v\|_2^2  \\
&=
\lambda_{\min}(\Omega_{U,h}) \, \|\Sigma_{UW}^{\prime-1} R_1' v\|_2^2 \\
&\ge
\lambda_{\min}(\Omega_{U,h}) \, \sigma_{\min}^2(\Sigma_{UW}^{-1}) \, \|R_1'v\|_2^2 .
\end{split}
\end{align}
Since $R_1R_1'=I_p$, $\|R_1'v\|_2=\|v\|_2$. Moreover,
Assumption \ref{Assump_deLS_de2S}(\textit{vii}) implies
$\sigma_{\min}(\Sigma_{UW}^{-1})\ge C^{-1}$. Hence
$v' G \Omega_{U,h} G' v \ge C^{-1}\|v\|_2^2$, so the population variance
$s.e._{\beta_{1,h}}^{(\mathrm{de\text{-}2S})}(v)^2
=
v' R_1 \Sigma_{UW}^{-1} \Omega_{U,h} \Sigma_{UW}^{\prime-1} R_1' v$
is well defined and bounded away from zero.

By Lemma \ref{lemma6}, under Condition \ref{cond1_de_2S},
\begin{align}\label{eq:asym_linear}
\begin{split}
\frac{\sqrt n\, v'(\hat\beta_{1,h}^{(\mathrm{de\text{-}2S})}-\beta_{1,h})}
{s.e._{\beta_{1,h}}^{(\mathrm{de\text{-}2S})}(v)}
&=
\frac{1}{\sqrt n}
\sum_{t=p}^{n-h}
\frac{v' G U_t e_{t,h}}
{s.e._{\beta_{1,h}}^{(\mathrm{de\text{-}2S})}(v)}
+ o_p(1).
\end{split}
\end{align}
Lemma \ref{lemma7} implies that the leading term converges in distribution to
$\mathcal N(0,1)$, and asymptotic normality follows by Slutsky’s theorem.

We now establish consistency of the HAC variance estimator
$R_1 (\widehat{\Sigma}_{UW}^{(\mathrm{re})})^{-1}
\hat\Omega_{U,h}^{(hac)}
(\widehat{\Sigma}_{UW}^{(\mathrm{re})})^{\prime-1} R_1'$
(the HC case is identical). Fix $v$ with $\|v\|_1=1$. Consider
\begin{align}\label{eq:hac_diff}
\begin{split}
&
\Big|
v'R_1 (\widehat{\Sigma}_{UW}^{(\mathrm{re})})^{-1}
\hat\Omega_{U,h}^{(hac)}
(\widehat{\Sigma}_{UW}^{(\mathrm{re})})^{\prime-1} R_1'v
-
v'R_1 \Sigma_{UW}^{-1}
\Omega_{U,h}
\Sigma_{UW}^{\prime-1} R_1'v
\Big|
\le
S_1 + 2S_2 + S_3 + 2S_4 + S_5 ,
\end{split}
\end{align}
where
\begin{align*}
S_1
&:=
\big|
v'R_1
\big((\widehat{\Sigma}_{UW}^{(\mathrm{re})})^{-1}-\Sigma_{UW}^{-1}\big)
\big(\hat\Omega_{U,h}^{(hac)}-\Omega_{U,h}\big)
\big((\widehat{\Sigma}_{UW}^{(\mathrm{re})})^{-1}-\Sigma_{UW}^{-1}\big)
R_1'v
\big|,\\
S_2
&:=
\big|
v'R_1
\big((\widehat{\Sigma}_{UW}^{(\mathrm{re})})^{-1}-\Sigma_{UW}^{-1}\big)
\big(\hat\Omega_{U,h}^{(hac)}-\Omega_{U,h}\big)
\Sigma_{UW}^{-1}
R_1'v
\big|,\\
S_3
&:=
\big|
v'R_1
\big((\widehat{\Sigma}_{UW}^{(\mathrm{re})})^{-1}-\Sigma_{UW}^{-1}\big)
\Omega_{U,h}
\big((\widehat{\Sigma}_{UW}^{(\mathrm{re})})^{-1}-\Sigma_{UW}^{-1}\big)
R_1'v
\big|,\\
S_4
&:=
\big|
v'R_1
\big((\widehat{\Sigma}_{UW}^{(\mathrm{re})})^{-1}-\Sigma_{UW}^{-1}\big)
\Omega_{U,h}
\Sigma_{UW}^{-1}
R_1'v
\big|,\\
S_5
&:=
\big|
v'R_1
\Sigma_{UW}^{-1}
\big(\hat\Omega_{U,h}^{(hac)}-\Omega_{U,h}\big)
\Sigma_{UW}^{-1}
R_1'v
\big|.
\end{align*}

To control inverse perturbations, note that
$(\widehat{\Sigma}_{UW}^{(\mathrm{re})})^{-1}-\Sigma_{UW}^{-1}
=
\Sigma_{UW}^{-1}
(\Sigma_{UW}-\widehat{\Sigma}_{UW}^{(\mathrm{re})})
(\widehat{\Sigma}_{UW}^{(\mathrm{re})})^{-1}$, so
\begin{align}\label{eq:inv_rate}
\begin{split}
\|(\widehat{\Sigma}_{UW}^{(\mathrm{re})})^{-1}-\Sigma_{UW}^{-1}\|_\infty
=
O_p\left(
k_W^2\|\widehat{\Sigma}_{UW}^{(\mathrm{re})}-\Sigma_{UW}\|_\infty
\right),
\end{split}
\end{align}
under Condition \ref{cond1_de_2S}, where we used
$\|\Sigma_{UW}^{-1}\|_\infty=O(k_W)$ and
$\|(\widehat{\Sigma}_{UW}^{(\mathrm{re})})^{-1}\|_\infty=O_p(k_W)$.

Since $\|v\|_1=1$ and $\|R_1\|_\infty=1$, the quadratic terms satisfy
\begin{align}\label{eq:S1S5}
\begin{split}
S_1 &\lesssim
\|(\widehat{\Sigma}_{UW}^{(\mathrm{re})})^{-1}-\Sigma_{UW}^{-1}\|_\infty^{2}
|\tilde v'(\hat\Omega_{U,h}^{(hac)}-\Omega_{U,h})\tilde v|,\\
S_2 &\lesssim
\|(\widehat{\Sigma}_{UW}^{(\mathrm{re})})^{-1}-\Sigma_{UW}^{-1}\|_\infty
|\tilde v'(\hat\Omega_{U,h}^{(hac)}-\Omega_{U,h})\tilde v|,\\
S_3 &\lesssim
\|(\widehat{\Sigma}_{UW}^{(\mathrm{re})})^{-1}-\Sigma_{UW}^{-1}\|_\infty^{2},\\
S_4 &\lesssim
\|(\widehat{\Sigma}_{UW}^{(\mathrm{re})})^{-1}-\Sigma_{UW}^{-1}\|_\infty,\\
S_5 &\lesssim
k_W^2 |\tilde v'(\hat\Omega_{U,h}^{(hac)}-\Omega_{U,h})\tilde v|.
\end{split}
\end{align}
for some $\tilde v\in \mathbb R^{dp\times 1}$, $\|\tilde v\|_1=1$.

Because
$\|(\widehat{\Sigma}_{UW}^{(\mathrm{re})})^{-1}-\Sigma_{UW}^{-1}\|_\infty=o_p(1)$,
$S_1$ is dominated by $S_2$, $S_2$ by $S_5$, and $S_3$ by $S_4$. Therefore,
\[
\Big|
\widehat{s.e.}_{\hat\beta_{1,h}^{(\mathrm{de\text{-}2S})}}^{(hac)}(v)^2
-
s.e._{\beta_{1,h}}^{(\mathrm{de\text{-}2S})}(v)^2
\Big|
\lesssim
k_W^2\|\widehat{\Sigma}_{UW}^{(\mathrm{re})}-\Sigma_{UW}\|_\infty
+
k_W^2\|\hat\Omega_{U,h}^{(hac)}-\Omega_{U,h}\|_{\max}
=
o_p(1),
\]
under Condition \ref{cond1_de_2S} and Lemma \ref{lemma8}. This establishes
consistency of the HAC variance estimator.

\end{proof}

\section{Auxiliary lemmas}\label{sec:OA_lemmas}

%########################################################### Lemma 1
\begin{lemma}[Lemma A.2 of \cite{krampe2023structural}]\label{lemma1}
Let $\left\{\Phi_j^{(k)}, j=0,1, \ldots\right\}, k=1,2$, be linear filters with $\sum_{j=0}^{\infty}\big\|\Phi_j^{(k)}\big\|_2=O(1), k=1$, 2 . Then under Assumption \ref{Assump_deLS_de2S}(iv)
$$
\left\|1 / \sqrt{n} \sum_{t=1}^n \sum_{j, k=0}^{\infty} \Phi_j^{(1)}\left(u_{t-j} u_{t-k}^{\prime}-\mathbf{1}(j=k) \Sigma_{u}\right)\left(\Phi_k^{(2)}\right)^{\prime}\right\|_{\max }=O\left(\sqrt{\tilde{\nu}_n}\right)
$$
\end{lemma}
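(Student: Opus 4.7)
The plan is to reduce the max--norm bound to a maximum over scalar entries via a union bound, exploit the filter summability to rewrite each entry as a centered sample second moment of scalar filtered processes, and then apply a Fuk--Nagaev type tail inequality combined with a union bound to match the rate $\sqrt{\tilde{\nu}_n}$.

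Step by step, I would first observe that $\|B\|_{\max}=\max_{a,b}|\tilde e_a^\prime B\,\tilde e_b|$, so it suffices to control each entry uniformly. Denoting $\phi_j^a:=(\Phi_j^{(1)})^\prime \tilde e_a$ and $\psi_k^b:=(\Phi_k^{(2)})^\prime \tilde e_b$, the $(a,b)$-entry rewrites as $n^{-1/2}\sum_{t=1}^n Z_t^{(a,b)}$ with
\begin{align*}
Z_t^{(a,b)}:=X_t^a Y_t^b-\mathbb E[X_t^a Y_t^b],\qquad X_t^a:=\sum_{j\ge 0}(\phi_j^a)^\prime u_{t-j},\qquad Y_t^b:=\sum_{k\ge 0}(\psi_k^b)^\prime u_{t-k},
\end{align*}
where I have used the orthogonality $\mathbb E[u_{t-j}u_{t-k}^\prime]=\mathbf{1}(j=k)\Sigma_u$ consistent with Assumption \ref{Assump_deLS_de2S}(iv). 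Next, using Minkowski together with the summability $\sum_{j\ge 0}\|\Phi_j^{(k)}\|_2=O(1)$ and the moment bound $\mathbb E|\lambda^\prime u_t|^q\le C$ from Assumption \ref{Assump_deLS_de2S}(v), I would show $\mathbb E|X_t^a|^q+\mathbb E|Y_t^b|^q=O(1)$ uniformly in $(a,b)$, and hence by Cauchy--Schwarz $\mathbb E|Z_t^{(a,b)}|^{q/2}=O(1)$ uniformly. Note that the baseline case $\Phi_0^{(k)}=I_d$, $\Phi_j^{(k)}=0$ for $j\ge 1$, recovers exactly Assumption \ref{Assump_deLS_de2S}(iv); the general case follows by splitting the double sum into the diagonal part $\sum_j \Phi_j^{(1)}(u_{t-j}u_{t-j}^\prime-\Sigma_u)(\Phi_j^{(2)})^\prime$ and the off-diagonal cross-product part $\sum_{j\ne k}\Phi_j^{(1)}u_{t-j}u_{t-k}^\prime(\Phi_k^{(2)})^\prime$, each of which inherits the baseline rate entry-wise thanks to filter summability.

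To upgrade the entry-wise bound into a uniform control over the $d^2$ pairs $(a,b)$, I would apply a Fuk--Nagaev-type tail inequality to the serially dependent sequence $\{Z_t^{(a,b)}\}$: for suitable constants $c_1,c_2,c_3>0$,
\begin{align*}
\mathbb P\!\left(\left|\frac{1}{\sqrt n}\sum_{t=1}^n Z_t^{(a,b)}\right|>y\right)\le c_1\exp(-c_2 y^2)+c_3\,n^{1-q/4}\,y^{-q/2},
\end{align*}
uniformly in $(a,b)$. A union bound over the $d^2$ pairs then forces $y\gtrsim \sqrt{\log d}+(nd)^{2/q}=O(\sqrt{\tilde{\nu}_n})$, which is precisely the claimed rate given $\tilde{\nu}_n=\log d+(nd)^{4/q}$. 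The main obstacle is the serial dependence of the quadratic functional $Z_t^{(a,b)}$ of $\{u_s\}$: even when $u_t$ itself is an m.d.s.\ or strongly mixing with geometric decay, $Z_t^{(a,b)}$ is only weakly dependent and direct exponential concentration is delicate. The standard resolution -- and the one implicit in \cite{krampe2023structural} -- is either a Burkholder--Rosenthal argument built on a martingale approximation, or the physical dependence measure of Wu together with its associated Fuk--Nagaev inequality, with the truncation threshold tuned so that the polynomial--tail contribution scales like $(nd)^{4/q}$ under only $q$ finite moments of the innovations.
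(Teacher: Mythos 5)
The paper does not actually prove Lemma \ref{lemma1}: its ``proof'' is a one-line deferral to Appendix A of \cite{krampe2023structural}, of which the lemma is a verbatim restatement. So there is no in-paper argument to compare yours against line by line; what can be assessed is whether your sketch would stand on its own, and there it has two soft spots worth flagging.

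First, the lemma is stated as a consequence of Assumption \ref{Assump_deLS_de2S}(iv), a \emph{high-level} concentration condition on the contemporaneous sample covariance $n^{-1}\sum_t U(u_tu_t'-\Sigma_u)V$. Your argument does not actually use that assumption as its engine; instead you rebuild concentration from the primitive moment bound of Assumption \ref{Assump_deLS_de2S}(v) together with an unstated weak-dependence structure on $\{u_t\}$ (m.d.s.\ plus physical dependence, or mixing) needed to run a Fuk--Nagaev inequality for the quadratic functionals $Z_t^{(a,b)}$. At the point where Lemma \ref{lemma1} is invoked (e.g.\ in Theorem \ref{theo_consistency_deLS} and Lemma \ref{lemma2}), only Assumption \ref{Assump_deLS_de2S} is in force; the mixing condition of Assumption \ref{Assump_RegCond_I} is introduced only later. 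You are therefore quietly strengthening the hypotheses relative to the stated lemma, which is designed precisely so that the user need not commit to a specific dependence framework beyond (iv) and (v).

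Second, the real content of the lemma is the off-diagonal part $\sum_{j\ne k}\Phi_j^{(1)}u_{t-j}u_{t-k}'(\Phi_k^{(2)})'$, and your treatment of it is asserted rather than shown: Assumption \ref{Assump_deLS_de2S}(iv) says nothing about cross-lag products $u_{t-j}u_{t-k}'$ with $j\ne k$, and ``inherits the baseline rate entry-wise thanks to filter summability'' hides the need for a bound that is \emph{uniform over the infinitely many lag pairs} before the filter weights can be summed. Your Fuk--Nagaev route, applied directly to $Z_t^{(a,b)}=X_t^aY_t^b-\mathbb E[X_t^aY_t^b]$ rather than lag pair by lag pair, would indeed circumvent this, and the exponent bookkeeping ($\sqrt{\log d}+(nd)^{2/q}\asymp\sqrt{\tilde\nu_n}$ with $q/2$ finite moments for $Z_t$) is consistent with $\tilde\nu_n=\log d+(nd)^{4/q}$; but the tail inequality for these serially dependent quadratic forms is exactly the hard step, and it is left as a black box. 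In short: the outline is the right one and matches what \cite{krampe2023structural} do, but as written it is a plausible reconstruction rather than a proof, and it derives the conclusion from different (stronger) primitives than the lemma advertises.
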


%########################################################### Lemma 5
\begin{lemma}\label{lemma5}
Let 
$$
\widehat{CN}:=\frac{1}{n} \sum_{t=p}^{n-h} R_1 (\widehat{\Sigma}_{UW}^{(\mathrm{re})})^{-1} \hat U_t W_t^{\prime} R_1^{\prime}\quad \text{and} \quad CN:=\frac{1}{n} \sum_{t=p}^{n-h} R_1 \Sigma_{UW}^{-1} U_tW_t^{\prime} R_1^{\prime}.
$$
Under Assumptions \ref{Assump_deLS_de2S}(ii), and (iv)-(vi), it holds true that:
\begin{flalign*}
   (a)&\left\|\frac{1}{\sqrt{n}} \sum_{t=p}^{n - h} \Sigma_{U W}^{-1} U_t e_{t ,h}\right\|_{\text {max }}=O_p\left(\sqrt{\tilde{\nu}_n}\right) \quad\text{and} \left\|\frac{1}{n} \sum_{t=p}^{n-h} \Sigma_{UW}^{-1} U_t W_t^{\prime}-I_{d p}\right\|_{\text {max }}=O_p\left(\sqrt{\tilde{\nu}_n/n}\right) ;&\\
   (b)& \left\|\widehat{C N}^{-1}-C N^{-1}\right\|_{\text {max }}=O_p\left(\|\hat{\mathbf{A}}^{(\mathrm{re})}-\mathbf{A}\|_{\infty}k_{W}+\left\|\widehat{\Sigma}_{U W}^{(\mathrm{re})}-\Sigma_{U W}\right\|_{\infty} k_{W}\right);&\\
   (c)& \left\|\widehat{C N}^{-1}-I_p\right\|_{\text {max }}=O_p\left(\sqrt{\tilde{\nu}_n / n}+\|\hat{\mathbf{A}}^{(\mathrm{re})}-\mathbf{A}\|_{\infty}k_{W}+\left\|\widehat{\Sigma}_{U W}^{(\mathrm{re})}-\Sigma_{U W}\right\|_{\infty} k_{W}\right).&
\end{flalign*}
\end{lemma}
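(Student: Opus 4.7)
The plan is to mirror, almost line for line, the structure of Lemmas \ref{lemma2} and \ref{lemma3}, with $W_t$ replaced by $U_t$ and $\Sigma_W$ replaced by $\Sigma_{UW}$ throughout. The single new ingredient relative to the Least Squares analysis is that the constructed instrument $\hat{U}_t$ inherits the estimation error of $\hat{\mathbf{A}}$ through the residuals $\hat{u}_t = w_t - J\hat{\mathbf{A}}W_{t-1}$, so the difference $\hat{U}_t - U_t$ must be tracked via Assumption \ref{Assump_deLS_de2S}(\textit{iii}) together with a max-norm bound on sample covariances of lagged $W_t$'s.

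For part (a), both bounds reduce to Lemma \ref{lemma1} applied to carefully chosen filters. I would exploit three MA representations: $U_t = \sum_{j=0}^{p-1}(\tilde{e}_{p,j+1}\otimes I_d)u_{t-j}$, $e_{t,h} = \sum_{i=0}^{h-1} e_y^\prime J\mathbf{A}^i J^\prime u_{t+h-i}$ (with $e_y$ the unit vector isolating the $y$-equation), and $W_t = \sum_{k=0}^\infty \Upsilon_k u_{t-k}$ with $\Upsilon_k = \mathbf{A}^k J^\prime$. For the first inequality, after a time shift $s = t+h$ the summation rewrites in a form matching the hypotheses of Lemma \ref{lemma1} with $\Phi^{(1)}_j = \Sigma_{UW}^{-1}(\tilde{e}_{p,j-h+1}\otimes I_d)$ for $h\le j\le h+p-1$ and $\Phi^{(2)}_k = e_y^\prime J\mathbf{A}^k J^\prime$ for $0\le k\le h-1$; crucially these two index ranges are disjoint, so the $\mathbf{1}(j=k)$ centering in Lemma \ref{lemma1} vanishes and the expectation of the raw sum is zero, matching the statement. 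Both filter-summability conditions are immediate from Assumption \ref{Assump_deLS_de2S}(\textit{vi}) (boundedness of $\|\Sigma_{UW}^{-1}\|_2$) and the stability condition (\textit{ii}). For the second inequality I would use $E[U_t W_t^\prime] = \Sigma_{UW}$ (which, after multiplication by $\Sigma_{UW}^{-1}$, produces the $I_{dp}$ centering) together with Lemma \ref{lemma1} applied to $\Phi^{(1)}_j = \Sigma_{UW}^{-1}(\tilde{e}_{p,j+1}\otimes I_d)$ and $\Phi^{(2)}_k = \Upsilon_k$.

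For part (b), the decomposition I would work with is
\begin{align*}
\widehat{CN} - CN &= \tfrac{1}{n}\sum_{t=p}^{n-h} R_1 \hat{\Sigma}_{UW}^{-1}\bigl(\Sigma_{UW} - \hat{\Sigma}_{UW}\bigr)\Sigma_{UW}^{-1} U_t W_t^\prime R_1^\prime \\
&\quad + \tfrac{1}{n}\sum_{t=p}^{n-h} R_1 \hat{\Sigma}_{UW}^{-1}\bigl(\hat{U}_t - U_t\bigr) W_t^\prime R_1^\prime \;=:\; I_A + I_B.
\end{align*}
The term $I_A$ is bounded using $\|B_1 B_2\|_{\max}\le\|B_1\|_\infty\|B_2\|_{\max}$, the rate $\|\hat{\Sigma}_{UW}^{-1}\|_\infty = O_p(k_{UW})$ from Assumption \ref{Assump_deLS_de2S}(\textit{vi}), and the second inequality of part (a), giving $\|I_A\|_{\max} = O_p\bigl(k_{UW}\|\hat{\Sigma}_{UW} - \Sigma_{UW}\|_\infty\bigr)$. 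For $I_B$, the identity $\hat{u}_{t-k} - u_{t-k} = -J(\hat{\mathbf{A}} - \mathbf{A})W_{t-k-1}$ is applied block by block to extract $\|\hat{\mathbf{A}} - \mathbf{A}\|_\infty$, leaving $p$ auxiliary sample covariances of the form $n^{-1}\sum_t W_{t-k-1}W_t^\prime$, each $O_p(1)$ in max-norm by the analog of Lemma \ref{lemma2}(a); this yields $\|I_B\|_{\max} = O_p\bigl(k_{UW}\|\hat{\mathbf{A}} - \mathbf{A}\|_\infty\bigr)$. To finish I would invoke the identity $\widehat{CN}^{-1} - CN^{-1} = \widehat{CN}^{-1}(CN - \widehat{CN})CN^{-1}$, having first verified that $\|CN^{-1}\|_1 = O_p(1)$ and $\|\widehat{CN}^{-1}\|_\infty = O_p(1)$, both of which follow because part (a) already places $CN$ close to $I_p$ in max-norm.

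Part (c) is then a two-line combination: the second inequality of part (a) together with $R_1 R_1^\prime = I_p$ gives $\|CN - I_p\|_{\max} = O_p(\sqrt{\tilde{\nu}_n/n})$; the triangle inequality with part (b) and the identity $\widehat{CN}^{-1} - I_p = \widehat{CN}^{-1}(I_p - \widehat{CN})$ then deliver the claimed rate. The hard part will be the bookkeeping for $I_B$ in part (b): $\hat{U}_t - U_t$ is a $p$-block object whose $k$-th block depends on a distinct lag $W_{t-k-1}$, so one cannot simply quote Lemma \ref{lemma2}(a) as a black box but must check that each of the $p$ lagged sample cross-covariances $n^{-1}\sum_t W_{t-k-1}W_t^\prime$ is $O_p(1)$ in max-norm, which itself requires another instance of Lemma \ref{lemma1} with shifted filters. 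Everything else reduces to matrix-norm manipulations and filter choices already exercised in the proofs of Lemmas \ref{lemma2} and \ref{lemma3}.
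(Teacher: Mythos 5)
Your proposal is correct and follows essentially the same route as the paper: identical filter choices (including the time shift and the disjoint index ranges) for part (a), the same reduction of $\hat{U}_t-U_t$ to $J(\mathbf{A}-\hat{\mathbf{A}})W_{t-j-1}$ with lagged sample cross-covariances bounded via Lemma \ref{lemma1}, and the same inverse-difference identity and triangle-inequality steps for parts (b) and (c). Your two-term split of $\widehat{CN}-CN$ is just an algebraic regrouping of the paper's three-term split ($I_A=\tilde I_2$ and $I_B=\tilde I_1+\tilde I_3$), so nothing of substance differs.
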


%########################################################### Lemma 6

\begin{lemma}\label{lemma6}
If Assumption \ref{Assump_deLS_de2S} is satisfied, then for any vector $v\in\mathbb{R}^p$ such that $\|v\|_1=1$, 
\begin{equation}\label{res_lemma6}
\begin{aligned}
 &\sqrt{n} v^{\prime}\left(\hat{\beta}_{1, h}^{(de-2S)}-\beta_{1,h}\right)=v^{\prime}\left(E\left[U_{1,t}^\perp  W_{1, t}^{\prime}\right]\right)^{-1}\left(\frac{1}{\sqrt{n}} \sum_{t=p}^{n-h} U_{1, t}^\perp e_{t, h}\right) \\
&\hspace{2cm} + O_p\left(\tilde{\nu}_n/\sqrt{n}+\|\hat{\mathbf{A}}^{(\mathrm{re})}-\mathbf{A}\|_{\infty}k_{W}\sqrt{\tilde{\nu}_n }+\left\|\widehat{\Sigma}_{U W}^{(\mathrm{re})}-\Sigma_{U W}\right\|_{\infty} k_{W}\sqrt{\tilde{\nu}_n }\right.&\\
&\hspace{3cm}\left.+\left\|\hat{\beta}_{2, h}-\beta_{2, h}\right\|_{\infty}\left\{\sqrt{\tilde{\nu}_n}+\|\hat{\mathbf{A}}^{(\mathrm{re})}-\mathbf{A}\|_{\infty}k_{W}\sqrt{\tilde{\nu}_n }+\left\|\widehat{\Sigma}_{U W}^{(\mathrm{re})}-\Sigma_{U W}\right\|_{\infty} k_{W}\sqrt{\tilde{\nu}_n }\right\}\right) 
\end{aligned}
\end{equation}
\end{lemma}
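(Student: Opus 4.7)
My plan is to mirror the structure of the proof of Lemma \ref{lemma4} as closely as possible, substituting $U_{1,t}^\perp$, $\hat U_{1,t}^\perp$, $\Sigma_{UW}$, $\hat\Sigma_{UW}$, $\widehat{CN}$, and $CN$ for their LS counterparts, and then accounting for one genuinely new source of estimation error: the fact that the instruments $\hat U_t$ are themselves estimated via $\hat{\mathbf{A}}$, whereas in the LS derivation the regressor $W_t$ was directly observed. Substituting $y_{t+h} = W_{1,t}^\prime\beta_{1,h} + W_{2,t}^\prime\beta_{2,h} + e_{t,h}$ into \eqref{de-2s} gives
\begin{align*}
\sqrt{n}\,v'\big(\hat\beta_{1,h}^{(de-2S)} - \beta_{1,h}\big) &= v' \Big(\tfrac{1}{n}\sum_{t=p}^{n-h} \hat U_{1,t}^\perp W_{1,t}'\Big)^{-1}\Big(\tfrac{1}{\sqrt{n}}\sum_{t=p}^{n-h} \hat U_{1,t}^\perp e_{t,h}\Big) \\
&\quad + v' \Big(\tfrac{1}{n}\sum_{t=p}^{n-h} \hat U_{1,t}^\perp W_{1,t}'\Big)^{-1}\Big(\tfrac{1}{\sqrt{n}}\sum_{t=p}^{n-h} \hat U_{1,t}^\perp W_{2,t}'(\beta_{2,h} - \hat\beta_{2,h})\Big),
\end{align*}
and after adding and subtracting the main term $v'(\mathbb{E}[U_{1,t}^\perp W_{1,t}'])^{-1}(n^{-1/2}\sum U_{1,t}^\perp e_{t,h})$ I obtain a decomposition of the form main term plus $\tilde\Lambda_0 + \tilde\Lambda_1 + \tilde\Lambda_2$, where $\tilde\Lambda_0$ replaces the sample normalization by its population limit, $\tilde\Lambda_1$ collects the score discrepancy, and $\tilde\Lambda_2$ absorbs the effect of using $\hat\beta_{2,h}$.

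Next, I would split $\tilde\Lambda_1$ using $\hat U_{1,t}^\perp - U_{1,t}^\perp = \widehat{CN}^{-1} R_1\hat\Sigma_{UW}^{-1}\hat U_t - CN^{-1} R_1\Sigma_{UW}^{-1} U_t$ and expand into pieces parallel to $\Lambda_{11},\Lambda_{12},\Lambda_{13}$ from the LS proof, together with additional pieces triggered by $\hat U_t - U_t$. Using the companion representation, $\hat U_t - U_t = \sum_{j=0}^{p-1}(\tilde e_{p(j+1)}\otimes I_d) J(\mathbf{A}-\hat{\mathbf{A}}) W_{t-j-1}$, exactly as in the proof of Lemma \ref{lemma5}, so each such extra piece is controlled by $\|\hat{\mathbf{A}}-\mathbf{A}\|_\infty$ multiplied by a factor of the form $\big\|n^{-1/2}\sum_t W_{t-j-1} e_{t,h}\big\|_{\max} = O_p(\sqrt{\tilde\nu_n})$ obtained by applying Lemma \ref{lemma1} to suitable filters. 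I would then perform the analogous expansion for $\tilde\Lambda_2$, producing pieces corresponding to $\Lambda_{21},\ldots,\Lambda_{24}$ plus extras involving $\hat U_t - U_t$ bounded in the same fashion.

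Finally, I would plug in the rates supplied by Lemma \ref{lemma5}, Assumption \ref{Assump_deLS_de2S}(vi) (so that $\|\Sigma_{UW}^{-1}\|_\infty = O(k_{UW})$ and $\|\hat\Sigma_{UW}^{-1}\|_\infty = O_p(k_{UW})$), Theorem \ref{theo_consistency_deLS}(iii) for $\|\hat\Sigma_{UW}-\Sigma_{UW}\|_\infty$, and Assumption \ref{Assump_deLS_de2S}(iii) for $\|\hat{\mathbf{A}}-\mathbf{A}\|_\infty$, then sum all bounds and drop the higher-order cross-terms to recover the rate claimed in \eqref{res_lemma6}. The principal obstacle, compared with Lemma \ref{lemma4}, is careful bookkeeping: the additional terms from $\hat U_t - U_t$ create many new cross-products, and I would need to verify that each of them is dominated by the expression on the right-hand side of \eqref{res_lemma6}. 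Conceptually no new tool is required beyond Lemmas \ref{lemma1}, \ref{lemma2}, and \ref{lemma5} together with the VAR estimation rate, but keeping track of the combinatorics of terms is the most delicate piece of the argument.
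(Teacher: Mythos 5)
Your proposal is correct and follows essentially the same route as the paper's proof: the same decomposition into a main term plus $\tilde\Lambda_0+\tilde\Lambda_1+\tilde\Lambda_2$, the same expansion of $\tilde\Lambda_1$ and $\tilde\Lambda_2$ via $\widehat{CN}$, $CN$, and $\hat\Sigma_{UW}^{-1}-\Sigma_{UW}^{-1}$, and the same treatment of the genuinely new terms generated by $\hat U_t-U_t=\sum_{j}(\tilde e_{p(j+1)}\otimes I_d)J(\mathbf{A}-\hat{\mathbf{A}})W_{t-j-1}$, each bounded by $\|\hat{\mathbf{A}}-\mathbf{A}\|_\infty$ times an $O_p(\sqrt{\tilde\nu_n})$ factor from Lemma \ref{lemma1}. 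The remaining work is exactly the bookkeeping you describe, and it yields the stated rate.
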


%########################################################### Lemma 7
\begin{lemma}\label{lemma7} If Assumptions \ref{Assump_deLS_de2S} is satisfied, then for any vector $v\in\mathbb{R}^p$ such that $\|v\|_1=1$, it holds that 
\begin{equation}\label{Asyp_clean_de-2S}
   \frac{1}{\sqrt{n}} \sum_{t=p}^{n-h}\frac{v' R_1 \Sigma_{UW}^{-1} U_t e_{t, h}}{s.e._{\hat\beta_{1,h}^{(de-2S)}}(v)} \xrightarrow{d} \mathcal{N}(0,1).
\end{equation}

\end{lemma}

\begin{lemma}\label{lemma8}%[lemma for the clean terms]
Let $\{W_t\}$ follow a VAR$(p)$ model and suppose that
Assumption~\ref{Assump_deLS_de2S} hold.
Then, for any given vector $v\in\mathbb R^{dp}$ satisfying
$\|v\|_1=1$ and $k_A^{4.5}\big(\nu_n/n\big)^{(1-\mu)/2}=o_p(1)$,
\begin{align}
\label{s_t}
\begin{split}
    &|v' (\Omega_{U,h}^{(hac)}-\Omega_{U,h})v | \xrightarrow{p} 0,\\
    &|v' (\Omega_{U,h}^{(hc)}-\Omega_{U,h})v | \xrightarrow{p} 0,
\end{split} 
\end{align}

\end{lemma}

\section{Proof of lemmas}\label{sec:OA_cltvar}

\begin{proof}[\textbf{Proof of Lemma \ref{lemma1}}] 
See Appendix A of \cite{krampe2023structural}.
\end{proof}

\begin{proof}[\textbf{Proof of Lemma \ref{lemma5}}]
Recall the $\operatorname{VAR}$ representation: under stability,
$W_t=\sum_{k=0}^{\infty}\Psi_k u_{t-k}$.
Let $\tilde e_{pj}$, $j=1,\ldots,p$, denote the $p$-dimensional unit vectors.
Then
\begin{align}
\label{eq:Ut_def}
U_t
:=
\big(u_t',u_{t-1}',\ldots,u_{t-p+1}'\big)'=
\sum_{j=0}^{p-1}\big(\tilde e_{p(j+1)}\otimes I_d\big)u_{t-j},
\end{align}
and therefore $\Sigma_{UW}
:=
\mathbb E[U_tW_t']
=
\sum_{j=0}^{p-1}\big(\tilde e_{p(j+1)}\otimes I_d\big)\Sigma_u\Psi_j'.$

\textbf{Part (a).}
For the second assertion, define the matrix-valued filters
\begin{align}
\Phi_j^{(1)}=
\begin{cases}
\Sigma_{UW}^{-1}\big(\tilde e_{p(j+1)}\otimes I_d\big), & j=0,\ldots,p-1,\\
0,& j\ge p,
\end{cases}
\qquad
\Phi_k^{(2)}=\Psi_k,\; k\ge 0.
\label{eq:filters_a2}
\end{align}
Then $\Sigma_{UW}^{-1}U_t=\sum_{j\ge 0}\Phi_j^{(1)}u_{t-j}$ and
$W_t=\sum_{k\ge 0}\Phi_k^{(2)}u_{t-k}$.
Moreover,
$\sum_{j\ge 0}\|\Phi_j^{(1)}\|_2 \le p\|\Sigma_{UW}^{-1}\|_2=O(1)$
(and similarly for $\|\cdot\|_\infty$ if needed),
while $\sum_{k\ge 0}\|\Phi_k^{(2)}\|_2=\sum_{k\ge 0}\|\Psi_k\|_2=O(1)$ by stability.
Applying Lemma \ref{lemma1} to the pair of linear processes yields
\begin{align}
\Bigg\|
\frac{1}{n}\sum_{t=p}^{n-h}\Sigma_{UW}^{-1}U_tW_t'
-
\mathbb E\big[\Sigma_{UW}^{-1}U_tW_t'\big]
\Bigg\|_{\max}
=
O_p\Big(\sqrt{\tilde\nu_n/n}\Big).
\label{eq:a_second_rate}
\end{align}
Since $\mathbb E[\Sigma_{UW}^{-1}U_tW_t']=\Sigma_{UW}^{-1}\Sigma_{UW}=I_{dp}$,
this is exactly the second result in part (a).

For the first assertion, note that
\begin{align}
\frac{1}{\sqrt n}\sum_{t=p}^{n-h}\Sigma_{UW}^{-1}U_t e_{t,h}
=
\frac{1}{\sqrt n}\sum_{t=p+h}^{n}\Sigma_{UW}^{-1}U_{t-h}\, e_{t-h,h}.
\label{eq:shift_index}
\end{align}
Using \eqref{eq:Ut_def},
\begin{align}
U_{t-h}
=
\sum_{j=h}^{p+h-1}\big(\tilde e_{p(j-h+1)}\otimes I_d\big)u_{t-j},
\qquad
e_{t-h,h}
=
\sum_{k=0}^{h-1} u_{t-k}'\, J(\mathbf A')^k J' v_1.
\label{eq:U_and_e_linear}
\end{align}
Equivalently, $e_{t-h,h}=\sum_{k=0}^{h-1}\big(v_1'J\mathbf A^kJ'\big)u_{t-k}$.
Define the filters
\begin{align}
\Phi_j^{(1)}=
\begin{cases}
\Sigma_{UW}^{-1}\big(\tilde e_{p(j-h+1)}\otimes I_d\big), & j=h,\ldots,p+h-1,\\
0, & \text{otherwise},
\end{cases}
\qquad
\Phi_k^{(2)}=
\begin{cases}
v_1'J\mathbf A^kJ', & k=0,\ldots,h-1,\\
0,& k\ge h.
\end{cases}
\label{eq:filters_a1}
\end{align}
Then $\Sigma_{UW}^{-1}U_{t-h}=\sum_{j\ge 0}\Phi_j^{(1)}u_{t-j}$ and
$e_{t-h,h}=\sum_{k\ge 0}\Phi_k^{(2)}u_{t-k}$, with
$\sum_j\|\Phi_j^{(1)}\|_2=O(1)$ and $\sum_k\|\Phi_k^{(2)}\|_2=O(1)$.
Applying Lemma \ref{lemma1} again gives the first result in part (a).

\textbf{Part (b).}
Write
\begin{align}
\label{eq:CN_decomp}
\begin{split}
\|\widehat{CN}-CN\|_{\max}
&=
\Bigg\|
\frac{1}{n}\sum_{t=p}^{n-h}
R_1\Big((\widehat{\Sigma}_{UW}^{(\mathrm{re})})^{-1}\hat U_t-\Sigma_{UW}^{-1}U_t\Big)
W_t'R_1'
\Bigg\|_{\max}\\
&\le \tilde I_1+\tilde I_2+\tilde I_3,
\end{split}
\end{align}
where
\begin{align}
\tilde I_1
&:=
\Bigg\|
\frac{1}{n}\sum_{t=p}^{n-h}
R_1\big((\widehat{\Sigma}_{UW}^{(\mathrm{re})})^{-1}-\Sigma_{UW}^{-1}\big)(\hat U_t-U_t)
W_t'R_1'
\Bigg\|_{\max},\notag\\
\tilde I_2
&:=
\Bigg\|
\frac{1}{n}\sum_{t=p}^{n-h}
R_1\big((\widehat{\Sigma}_{UW}^{(\mathrm{re})})^{-1}-\Sigma_{UW}^{-1}\big)U_t
W_t'R_1'
\Bigg\|_{\max},\notag\\
\tilde I_3
&:=
\Bigg\|
\frac{1}{n}\sum_{t=p}^{n-h}
R_1\Sigma_{UW}^{-1}(\hat U_t-U_t)W_t'R_1'
\Bigg\|_{\max}.\notag
\end{align}

First, Lemma \ref{lemma1} (applied to suitable filters) implies that for each fixed $j$,
\begin{align}
\Big\|
\frac{1}{n}\sum_{t} W_{t+j}W_t' - \Sigma_W(j)
\Big\|_{\max}
=
O_p\Big(\sqrt{\tilde\nu_n/n}\Big),
\qquad
\Sigma_W(j):=\mathbb E[W_{t+j}W_t'].
\label{eq:W_W_rate}
\end{align}
Since $\|\Sigma_W(j)\|_{\max}=O(1)$, we have
$\big\|\frac{1}{n}\sum_t W_{t+j}W_t'\big\|_{\max}=O_p(1)$.

Next, using $\hat u_t-u_t=J(\mathbf A-\widehat{\mathbf A}^{(\mathrm{re})})W_{t-1}$ and \eqref{eq:Ut_def},
\begin{align}
\hat U_t-U_t
&=
\sum_{j=0}^{p-1}\big(\tilde e_{p(j+1)}\otimes I_d\big)(\hat u_{t-j}-u_{t-j}) =
\sum_{j=0}^{p-1}\big(\tilde e_{p(j+1)}\otimes I_d\big)J(\mathbf A-\widehat{\mathbf A}^{(\mathrm{re})})W_{t-j-1}.
\label{eq:Uhat_minus_U}
\end{align}
Using $\|J\|_\infty=1$ and $\|\tilde e_{p(j+1)}\otimes I_d\|_\infty=1$,
\begin{align}
\label{eq:UW_cross_bound}
\begin{split}
\Big\|
\frac{1}{n}\sum_{t=p}^{n-h}(\hat U_t-U_t)W_t'
\Big\|_{\max}
&\le
\sum_{j=0}^{p-1}\|\widehat{\mathbf A}^{(\mathrm{re})}-\mathbf A\|_\infty\,
\Big\|
\frac{1}{n}\sum_{t=p}^{n-h}W_{t-j-1}W_t'
\Big\|_{\max}=
O_p\big(\|\widehat{\mathbf A}^{(\mathrm{re})}-\mathbf A\|_\infty\big).
\end{split}
\end{align}

Also, the second assertion in part (a) yields
\begin{align}
\Big\|
\frac{1}{n}\sum_{t=p}^{n-h}\Sigma_{UW}^{-1}U_tW_t'
\Big\|_{\max}
\le
\Big\|
\frac{1}{n}\sum_{t=p}^{n-h}\Sigma_{UW}^{-1}U_tW_t' - I_{dp}
\Big\|_{\max}
+\|I_{dp}\|_{\max} =
O_p(1).
\label{eq:SigmaInvUW_UW_avg}
\end{align}

Now use $\|R_1\|_\infty=1$, $\|\Sigma_{UW}^{-1}\|_\infty=O(k_W)$,
$\|(\widehat{\Sigma}_{UW}^{(\mathrm{re})})^{-1}\|_\infty=O_p(k_W)$, and
\begin{align}
\|(\widehat{\Sigma}_{UW}^{(\mathrm{re})})^{-1}-\Sigma_{UW}^{-1}\|_\infty
\le
\|(\widehat{\Sigma}_{UW}^{(\mathrm{re})})^{-1}\|_\infty
\|\widehat{\Sigma}_{UW}^{(\mathrm{re})}-\Sigma_{UW}\|_\infty
\|\Sigma_{UW}^{-1}\|_\infty.
\label{eq:inv_perturb}
\end{align}
Combining \eqref{eq:UW_cross_bound}--\eqref{eq:inv_perturb} gives
\begin{align}
\tilde I_1
&\le
\|(\widehat{\Sigma}_{UW}^{(\mathrm{re})})^{-1}-\Sigma_{UW}^{-1}\|_\infty\,
\Big\|
\frac{1}{n}\sum_{t=p}^{n-h}(\hat U_t-U_t)W_t'
\Big\|_{\max}
=
O_p\Big(\|\widehat{\Sigma}_{UW}^{(\mathrm{re})}-\Sigma_{UW}\|_\infty\,
\|\widehat{\mathbf A}^{(\mathrm{re})}-\mathbf A\|_\infty\, k_W^2\Big),\notag\\
\tilde I_2
&\le
\|(\widehat{\Sigma}_{UW}^{(\mathrm{re})})^{-1}\|_\infty\,
\|\widehat{\Sigma}_{UW}^{(\mathrm{re})}-\Sigma_{UW}\|_\infty\,
\Big\|
\frac{1}{n}\sum_{t=p}^{n-h}\Sigma_{UW}^{-1}U_tW_t'
\Big\|_{\max}
=
O_p\Big(\|\widehat{\Sigma}_{UW}^{(\mathrm{re})}-\Sigma_{UW}\|_\infty\, k_W\Big),\notag\\
\tilde I_3
&\le
\|\Sigma_{UW}^{-1}\|_\infty\,
\Big\|
\frac{1}{n}\sum_{t=p}^{n-h}(\hat U_t-U_t)W_t'
\Big\|_{\max}
=
O_p\Big(\|\widehat{\mathbf A}^{(\mathrm{re})}-\mathbf A\|_\infty\, k_W\Big).\label{eq:I123_rates}
\end{align}
Since $\tilde I_1$ is of higher order (a product of two estimation errors), \eqref{eq:I123_rates} implies
\begin{align}
\|\widehat{CN}-CN\|_{\max}
=
O_p\Big(
\|\widehat{\mathbf A}^{(\mathrm{re})}-\mathbf A\|_\infty\, k_W
+
\|\widehat{\Sigma}_{UW}^{(\mathrm{re})}-\Sigma_{UW}\|_\infty\, k_W
\Big).
\label{eq:CN_rate}
\end{align}

For the corresponding inverse, use
$\widehat{CN}^{-1}-CN^{-1}=\widehat{CN}^{-1}(CN-\widehat{CN})CN^{-1}$ and
$\|AXB\|_{\max}\le \|A\|_\infty\|X\|_{\max}\|B\|_1$ to obtain
\begin{align}
\|\widehat{CN}^{-1}-CN^{-1}\|_{\max}
\le
\|\widehat{CN}^{-1}\|_\infty\,
\|\widehat{CN}-CN\|_{\max}\,
\|CN^{-1}\|_1,
\label{eq:CNinv_bound}
\end{align}
which yields the stated rate under $\|\widehat{CN}^{-1}\|_\infty=O_p(1)$ and $\|CN^{-1}\|_1=O(1)$.

\textbf{Part (c).}
By part (a) and $R_1R_1'=I_p$,
\begin{align}
\|CN-I_p\|_{\max}
=
O_p\Big(\sqrt{\tilde\nu_n/n}\Big).
\label{eq:CN_minus_I}
\end{align}
Hence, by the triangle inequality and \eqref{eq:CN_rate},
\begin{align}
\|\widehat{CN}-I_p\|_{\max}
\le
\|\widehat{CN}-CN\|_{\max}+\|CN-I_p\|_{\max}
=
O_p\Big(
\sqrt{\tilde\nu_n/n}
+
\|\widehat{\mathbf A}^{(\mathrm{re})}-\mathbf A\|_\infty\, k_W
+
\|\widehat{\Sigma}_{UW}^{(\mathrm{re})}-\Sigma_{UW}\|_\infty\, k_W
\Big).
\label{eq:CNhat_minus_I}
\end{align}
Finally, $\widehat{CN}^{-1}-I_p=\widehat{CN}^{-1}(I_p-\widehat{CN})$, so
$\|\widehat{CN}^{-1}-I_p\|_{\max}\le \|\widehat{CN}^{-1}\|_\infty\|\widehat{CN}-I_p\|_{\max}$,
which completes the proof.
\end{proof}

\begin{proof}[\textbf{Proof of Lemma \ref{lemma6}}]

By the definition of the de-2S estimator,
\begin{align}\label{eq1_lemma6}
\begin{split}
\sqrt{n}\,
v^{\prime}\!\left(\hat{\beta}_{1, h}^{(\mathrm{de\text{-}2S})}-\beta_{1,h}\right)
&=
v^{\prime}\!\left(\frac{1}{n}\sum_{t=p}^{n-h}\hat U_{1,t}^{\perp}W_{1,t}^{\prime}\right)^{-1}
\!\left(\frac{1}{\sqrt{n}}\sum_{t=p}^{n-h}\hat U_{1,t}^{\perp}e_{t,h}\right)\\
&\quad+
v^{\prime}\!\left(\frac{1}{n}\sum_{t=p}^{n-h}\hat U_{1,t}^{\perp}W_{1,t}^{\prime}\right)^{-1}
\!\left(\frac{1}{\sqrt{n}}\sum_{t=p}^{n-h}\hat U_{1,t}^{\perp}
W_{2,t}^{\prime}(\beta_{2,h}-\hat\beta_{2,h})\right)\\
&=
v^{\prime}\!\left(E[U_{1,t}^{\perp}W_{1,t}^{\prime}]\right)^{-1}
\!\left(\frac{1}{\sqrt{n}}\sum_{t=p}^{n-h}U_{1,t}^{\perp}e_{t,h}\right)
+\tilde\Lambda_0+\tilde\Lambda_1+\tilde\Lambda_2 .
\end{split}
\end{align}

where
\begin{align}
\begin{split}
\tilde{\Lambda}_0
&:=
v^{\prime}\!\left\{
\left(\frac{1}{n}\sum_{t=p}^{n-h}U_{1,t}^{\perp}W_{1,t}^{\prime}\right)^{-1}
-
\left(E[U_{1,t}^{\perp}W_{1,t}^{\prime}]\right)^{-1}
\right\}
\left(\frac{1}{\sqrt{n}}\sum_{t=p}^{n-h}U_{1,t}^{\perp}e_{t,h}\right),
\label{eq4_lemma6}
\\
\tilde{\Lambda}_1
&:=
v^{\prime}\!\left(\frac{1}{n}\sum_{t=p}^{n-h}\hat U_{1,t}^{\perp}W_{1,t}^{\prime}\right)^{-1}
\!\left(\frac{1}{\sqrt{n}}\sum_{t=p}^{n-h}\hat U_{1,t}^{\perp}e_{t,h}\right)
-
v^{\prime}\!\left(\frac{1}{n}\sum_{t=p}^{n-h}U_{1,t}^{\perp}W_{1,t}^{\prime}\right)^{-1}
\!\left(\frac{1}{\sqrt{n}}\sum_{t=p}^{n-h}U_{1,t}^{\perp}e_{t,h}\right),
\\
\tilde{\Lambda}_2
&:=
v^{\prime}\!\left(\frac{1}{n}\sum_{t=p}^{n-h}\hat U_{1,t}^{\perp}W_{1,t}^{\prime}\right)^{-1}
\!\left(\frac{1}{\sqrt{n}}\sum_{t=p}^{n-h}\hat U_{1,t}^{\perp}
W_{2,t}^{\prime}(\beta_{2,h}-\hat\beta_{2,h})\right).
\end{split}
\end{align}

Using
\[
U_{1,t}^{\perp}
=
\left(R_1\Sigma_{UW}^{-1}R_1^{\prime}\right)^{-1}R_1\Sigma_{UW}^{-1}U_t,
\qquad
\hat U_{1,t}^{\perp}
=
\left(R_1(\widehat{\Sigma}_{UW}^{(\mathrm{re})})^{-1}R_1^{\prime}\right)^{-1}
R_1(\widehat{\Sigma}_{UW}^{(\mathrm{re})})^{-1}U_t,
\]
$\tilde{\Lambda}_1$ can be rewritten as
\begin{align}
\begin{split}
\tilde{\Lambda}_1
&=
\frac{1}{\sqrt{n}}\sum_{t=p}^{n-h}
v^{\prime}\!\left(
\widehat{C N}^{-1}R_1(\widehat{\Sigma}_{UW}^{(\mathrm{re})})^{-1}\hat U_t
-
C N^{-1}R_1\Sigma_{UW}^{-1}U_t
\right)e_{t,h} \\
&=
\tilde{\Lambda}_{11}+\tilde{\Lambda}_{12}+\tilde{\Lambda}_{13}+\tilde{\Lambda}_{14}.
\end{split}
\end{align}

where $\widehat{C N}$ and $C N$ are defined as in Lemma \ref{lemma5}, and
\begin{align}
\begin{split}
\tilde{\Lambda}_{11}
&:=
\frac{1}{\sqrt{n}}\sum_{t=p}^{n-h}
v^{\prime}(\widehat{C N}^{-1}-C N^{-1})
R_1\big((\widehat{\Sigma}_{UW}^{(\mathrm{re})})^{-1}-\Sigma_{UW}^{-1}\big)
\hat U_t e_{t,h},
\\
\tilde{\Lambda}_{12}
&:=
\frac{1}{\sqrt{n}}\sum_{t=p}^{n-h}
v^{\prime}(\widehat{C N}^{-1}-C N^{-1})
R_1\Sigma_{UW}^{-1}\hat U_t e_{t,h},
\label{eq2_lemma6}
\\
\tilde{\Lambda}_{13}
&:=
\frac{1}{\sqrt{n}}\sum_{t=p}^{n-h}
v^{\prime}C N^{-1}R_1
\big((\widehat{\Sigma}_{UW}^{(\mathrm{re})})^{-1}-\Sigma_{UW}^{-1}\big)
\hat U_t e_{t,h},
\\
\tilde{\Lambda}_{14}
&:=
\frac{1}{\sqrt{n}}\sum_{t=p}^{n-h}
v^{\prime}C N^{-1}R_1\Sigma_{UW}^{-1}(\hat U_t-U_t)e_{t,h}.
\end{split}
\end{align}

Similarly,
\begin{align}
\begin{split}
\tilde{\Lambda}_2
&=
\frac{1}{\sqrt{n}}\sum_{t=p}^{n-h}
v^{\prime}\widehat{C N}^{-1}R_1(\widehat{\Sigma}_{UW}^{(\mathrm{re})})^{-1}
\hat U_t W_t^{\prime}R_2^{\prime}(\beta_{2,h}-\hat\beta_{2,h})\\
&=
\big(\tilde{\Lambda}_{21}+\tilde{\Lambda}_{22}+\tilde{\Lambda}_{23}+\tilde{\Lambda}_{24}\big)
(\beta_{2,h}-\hat\beta_{2,h}),
\end{split}
\end{align}

where
\begin{align}
\begin{split}
\tilde{\Lambda}_{21}
&:=
\frac{1}{\sqrt{n}}\sum_{t=p}^{n-h}
v^{\prime}R_1\Sigma_{UW}^{-1}\hat U_t W_t^{\prime}R_2^{\prime},
\\
\tilde{\Lambda}_{22}
&:=
\frac{1}{\sqrt{n}}\sum_{t=p}^{n-h}
v^{\prime}R_1\big((\widehat{\Sigma}_{UW}^{(\mathrm{re})})^{-1}-\Sigma_{UW}^{-1}\big)
\hat U_t W_t^{\prime}R_2^{\prime},
\\
\tilde{\Lambda}_{23}
&:=
\frac{1}{\sqrt{n}}\sum_{t=p}^{n-h}
v^{\prime}(\widehat{C N}^{-1}-I_p)
R_1\Sigma_{UW}^{-1}\hat U_t W_t^{\prime}R_2^{\prime},
\label{eq3_lemma6}
\\
\tilde{\Lambda}_{24}
&:=
\frac{1}{\sqrt{n}}\sum_{t=p}^{n-h}
v^{\prime}(\widehat{C N}^{-1}-I_p)
R_1\big((\widehat{\Sigma}_{UW}^{(\mathrm{re})})^{-1}-\Sigma_{UW}^{-1}\big)
\hat U_t W_t^{\prime}R_2^{\prime}.
\end{split}
\end{align}
Although $\tilde{\Lambda}_2$ is written in terms of the estimated innovations
$\hat U_t$, the difference between $\hat U_t$ and $U_t$ is asymptotically
negligible under Condition~1 in the expansion of
$\tilde{\Lambda}_2(\beta_{2,h}-\hat{\beta}_{2,h})$. The remaining arguments proceed by establishing the stochastic orders of the
terms in \eqref{eq4_lemma6}, \eqref{eq2_lemma6}, and \eqref{eq3_lemma6}.  
The conclusion then follows by substituting the derived rates into
\eqref{eq1_lemma6} and noting that higher-order terms are asymptotically negligible.
\end{proof}

\begin{proof}[\textbf{Proof of Lemma \ref{lemma7}}]
The proof applies Theorem~5.20 of \cite{white2000asymptotic} to the (possibly
triangular) array
\[
z_{nt}
:=
v^{\prime} R_1 \Sigma_{UW}^{-1} U_t \, e_{t,h},
\qquad t=p,\ldots,n-h,
\]
where the dependence on $n$ is through the dimension $d$.

\textit{Step 1: Mean zero.}
By definition, $e_{t,h}$ is the population linear projection error at horizon $h$,
so $\mathbb E(e_{t,h}\mid \mathcal F_t)=0$. Since $U_t$ is $\mathcal F_t$-measurable, $\mathbb E[z_{nt}]=0 .$

\textit{Step 2: Mixing.}
Assumption~\ref{Assump_deLS_de2S}(\textit{v}) implies that $\{u_t\}$ is strongly
mixing with coefficients $\alpha(\cdot)$ of size $-r/(r-2)$ for some $r>2$.
Because $U_t$ is a measurable function of $(u_t,\ldots,u_{t-p+1})$ and $e_{t,h}$ is
a measurable function of $(u_{t+1},\ldots,u_{t+h})$ with fixed $(p,h)$, the process
$\{(U_t,e_{t,h})\}$ is strongly mixing with the same mixing size. Hence
$\{z_{nt}\}$ is strongly mixing of size $-r/(r-2)$ as a measurable transformation
of $(U_t,e_{t,h})$ (see, e.g., Proposition~3.50 of \cite{white2000asymptotic}).

\textit{Step 3: Uniform $r$th moment bound.}
Using $U_t=\sum_{j=0}^{p-1}(\tilde e_{p(j+1)}\otimes I_d)u_{t-j}$ and the linear
representation $e_{t,h}=\sum_{k=0}^{h-1} v_1^{\prime}\Psi_k u_{t+h-k}$ (with fixed
$p$ and $h$), we can write
\[
z_{nt}
=
\sum_{j=0}^{p-1}\sum_{k=0}^{h-1}
\bigl(a_{j}^{\prime}u_{t-j}\bigr)\bigl(b_k^{\prime}u_{t+h-k}\bigr),
\quad
a_{j}:=\left(\tilde e_{p(j+1)}^{\prime}\otimes I_d\right)\Sigma_{UW}^{-1}R_1^{\prime}v,
\quad
b_k:=\Psi_k^{\prime}v_1 .
\]
By $(\sum_{\ell=1}^m |x_\ell|)^r \le m^{r-1}\sum_{\ell=1}^m |x_\ell|^r$ and H\"older's
inequality,
\[
\mathbb E|z_{nt}|^r
\le
(ph)^{r-1}\sum_{j=0}^{p-1}\sum_{k=0}^{h-1}
\Bigl(\mathbb E|a_{j}^{\prime}u_{t-j}|^{2r}\Bigr)^{1/2}
\Bigl(\mathbb E|b_{k}^{\prime}u_{t+h-k}|^{2r}\Bigr)^{1/2}\lesssim \|a_j\|_1^r \|b_k\|_1^r .
\]
due to Assumption~\ref{Assump_deLS_de2S}(\textit{v}). Therefore,
 $\sup_{n}\sup_{t}\mathbb E|z_{nt}|^r < \infty .$

\textit{Step 4: Nondegeneracy of the asymptotic variance.} 
By the definition of $\Omega_{U,h}$ as the long-run variance of the score process
$R_1\Sigma_{UW}^{-1}U_t e_{t,h}$, since $\Omega_{U,h}$ is positive definite with $\lambda_{\min}(\Omega_{U,h})\ge C^{-1}$ and $\Sigma_{UW}$ is positive definite by its construction, it follows that the asymptotic variance is nondegenerate.

Steps 1--4 verify the conditions of Theorem~5.20 in \cite{white2000asymptotic}. The proof is therefore complete.
\end{proof}

%########################################################### Lemma 8
\begin{proof}[\textbf{Proof of Lemma \ref{lemma8}}]
Fix any
$v\in\mathbb R^{dp}$ with $\|v\|_1=1$. We prove the HC convergence in detail; the
HAC case follows from the same autocovariance consistency plus a standard kernel
argument.

\medskip
\noindent\textit{Step 1 (clean term).}
Let $z_t:=v's_t$. Since $v'\Omega_{U,h}v=\mathbb E(z_t^2)$, it suffices to show
$n^{-1}\sum_{t=p}^{n-h}(z_t^2-\mathbb Ez_t^2)=o_p(1)$. Recall
$s_t:= (e_{t,h},e_{t+1,h},\ldots,e_{t+p-1,h})\otimes u_t$, where each $e_{t+i,h}$
is a finite linear combination of $\{u_{t+i+1},\ldots,u_{t+i+h}\}$. Because $p$
and $h$ are fixed, $s_t$ is a measurable function of a finite block of $\{u_t\}$.
By Assumption \ref{Assump_deLS_de2S}(\textit{v}), $\{u_t\}$ is $\alpha$-mixing with
size $-r/(r-2)$ for some $r>2$, and this property is preserved under finite
blocking and measurable transformations. Hence $\{z_t^2\}$ is also $\alpha$-mixing
with the same size, which satisfies the mixing-size requirement in
Corollary~3.48 of \cite{white2000asymptotic}.

We verify the moment condition. It is enough to show $\mathbb E|z_t|^{2(r+\delta)}<\infty$
for some $\delta>0$. Since $p$ is fixed, it suffices to control a generic product
$\lambda'u_t\,e_{t,h}$ with $\|\lambda\|_2=1$. Using the representation
$e_{t,h}=\sum_{i=0}^{h-1}v_1'J\mathbf A^iJ'u_{t+h-i}$ and the inequality
$|\sum_{i=0}^{h-1}x_i|^{q}\le C\sum_{i=0}^{h-1}|x_i|^{q}$ for $q\ge1$, we obtain
\begin{align}
\label{eq:L8_mom1}
\begin{split}
\mathbb E|\lambda'u_t\,e_{t,h}|^{2r+2\delta}
&\le
C\sum_{i=0}^{h-1}\mathbb E\big|\lambda'u_t\cdot u_{t+h-i}'J\mathbf A^iJ'v_1\big|^{2r+2\delta}  \\
&\le
C\sum_{i=0}^{h-1}
\Big(\mathbb E|\lambda'u_t|^{4r+4\delta}\Big)^{1/2}
\Big(\mathbb E|u_{t}'J\mathbf A^iJ'v_1|^{4r+4\delta}\Big)^{1/2}.
\end{split}
\end{align}
Moreover, since $\dim(v_1)$ is fixed and $\|J\|_1=1$,
\[
|u_t'J\mathbf A^iJ'v_1|
\le
\|J\mathbf A^iJ'v_1\|_1\sup_{\|a\|_1=1}|a'u_t|
\le
\|\mathbf A^i\|_1\|v_1\|_1\sup_{\|a\|_1=1}|a'u_t|.
\]
By Assumption \ref{Assump_deLS_de2S}(\textit{v}), the directional moments
$\sup_{\|a\|_1=1}\mathbb E|a'u_t|^{4r+4\delta}$ are uniformly bounded for some
$\delta>0$, and by Assumption \ref{Assump_deLS_de2S}(\textit{ii}),
$\|\mathbf A^i\|_1\le Ck_A\varphi^i$. Combining these bounds in
\eqref{eq:L8_mom1} yields $\mathbb E|z_t|^{2(r+\delta)}<\infty$. Therefore
Corollary~3.48 of \cite{white2000asymptotic} applies to $\{z_t^2\}$ and gives
$n^{-1}\sum_{t=p}^{n-h}(z_t^2-\mathbb Ez_t^2)=o_p(1)$, i.e.
$n^{-1}\sum_{t=p}^{n-h}(v's_t)^2\to_p v'\Omega_{U,h}v$.

\medskip
\noindent\textit{Step 2 (estimation bias).}
Let $\hat z_t:=v'\hat s_t$. We show
$n^{-1}\sum_{t=p}^{n-h}(\hat z_t^{\,2}-z_t^2)=o_p(1)$. Since $p$ is fixed and
$\hat s_t=((\hat e_{t,h},\ldots,\hat e_{t+p-1,h})\otimes\hat u_t)$, without loss of generosity, we prove a representative block with $a=b=0$, namely
\begin{align}
\label{represen:block}
    \Bigg\|
\frac{1}{n}\sum_{t=p}^{n-h}\hat u_t\hat u_t'\hat e_{t,h}^{\,2}
-
\frac{1}{n}\sum_{t=p}^{n-h}u_tu_t'e_{t,h}^{\,2}
\Bigg\|_{\max}=o_p(1).
\end{align}

Write $\Delta u_t=\hat u_t-u_t$ and $\Delta e_{t,h}=\hat e_{t,h}-e_{t,h}$. Then
\begin{align}
\label{eq:L8_decomp}
\begin{split}
\hat u_t\hat u_t'\hat e_{t,h}^{\,2}-u_tu_t'e_{t,h}^{\,2}
&=
u_tu_t'(\hat e_{t,h}^{\,2}-e_{t,h}^{\,2})
+(\hat u_t\hat u_t'-u_tu_t')e_{t,h}^{\,2}
+(\hat u_t\hat u_t'-u_tu_t')(\hat e_{t,h}^{\,2}-e_{t,h}^{\,2}).
\end{split}
\end{align}
Taking $n^{-1}\sum_{t=p}^{n-h}(\cdot)$ and $\|\cdot\|_{\max}$, and applying
elementwise Cauchy--Schwarz to each term in \eqref{eq:L8_decomp}, we obtain
\begin{align}
\label{eq:L8_CS}
\begin{split}
\Bigg\|
\frac{1}{n}\sum_{t=p}^{n-h}\hat u_t\hat u_t'\hat e_{t,h}^{\,2}
-
\frac{1}{n}\sum_{t=p}^{n-h}u_tu_t'e_{t,h}^{\,2}
\Bigg\|_{\max}
\;\le\;
&
\max_{i,j}
\Bigg(
\frac{1}{n}\sum_{t=p}^{n-h} u_{i,t}^2u_{j,t}^2
\Bigg)^{1/2}
\Bigg(
\frac{1}{n}\sum_{t=p}^{n-h}(\hat e_{t,h}^{\,2}-e_{t,h}^{\,2})^2
\Bigg)^{1/2} \\
&+
\max_{i,j}
\Bigg(
\frac{1}{n}\sum_{t=p}^{n-h}(\hat u_{i,t}\hat u_{j,t}-u_{i,t}u_{j,t})^2
\Bigg)^{1/2}
\Bigg(
\frac{1}{n}\sum_{t=p}^{n-h} e_{t,h}^4
\Bigg)^{1/2} \\
&+
\max_{i,j}
\Bigg(
\frac{1}{n}\sum_{t=p}^{n-h}(\hat u_{i,t}\hat u_{j,t}-u_{i,t}u_{j,t})^2
\Bigg)^{1/2}
\Bigg(
\frac{1}{n}\sum_{t=p}^{n-h}(\hat e_{t,h}^{\,2}-e_{t,h}^{\,2})^2
\Bigg)^{1/2}.
\end{split}
\end{align}

We bound the factors. First,
\[
\max_{i,j}\frac{1}{n}\sum_{t=p}^{n-h} u_{i,t}^2u_{j,t}^2
\le
\max_{i}\frac{1}{n}\sum_{t=p}^{n-h} u_{i,t}^4
=
O_p(1)
\]
by Assumption~\ref{Assump_deLS_de2S}(\textit{v}). Second, $e_{t,h}$ is a finite
linear process with $\ell_1$ coefficients of order $O(k_A)$, hence
\[
\frac{1}{n}\sum_{t=p}^{n-h} e_{t,h}^4
=
O_p(k_A^4).
\]
Third, using
\[
\hat u_{i,t}\hat u_{j,t}-u_{i,t}u_{j,t}
=
\Delta u_{i,t}u_{j,t}+u_{i,t}\Delta u_{j,t}+\Delta u_{i,t}\Delta u_{j,t}
\]
and $(a+b+c)^2\le 3(a^2+b^2+c^2)$, another Cauchy--Schwarz step gives
\begin{align}
\label{eq:L8_du}
\begin{split}
\max_{i,j}\frac{1}{n}\sum_{t=p}^{n-h}(\hat u_{i,t}\hat u_{j,t}-u_{i,t}u_{j,t})^2
\;\lesssim\;
\Bigg(\max_i\frac{1}{n}\sum_{t=p}^{n-h} \Delta u_{i,t}^4\Bigg)^{1/2}
\Bigg(\max_i\frac{1}{n}\sum_{t=p}^{n-h} u_{i,t}^4\Bigg)^{1/2}
+
\max_i\frac{1}{n}\sum_{t=p}^{n-h} \Delta u_{i,t}^4.
\end{split}
\end{align}
Since $\Delta u_t=(\widehat{\mathbf A}^{(\mathrm{re})}-\mathbf A)W_{t-1}$, we have
$|\Delta u_{i,t}|
\le \|\widehat{\mathbf A}^{(\mathrm{re})}-\mathbf A\|_\infty\|W_{t-1}\|_{\max}$
and thus
\[
\max_i\frac{1}{n}\sum_{t=p}^{n-h} \Delta u_{i,t}^4
=
O_p\!\left(\|\widehat{\mathbf A}^{(\mathrm{re})}-\mathbf A\|_\infty^4\right)
\]
under Assumption~\ref{Assump_deLS_de2S}(\textit{v}). Plugging into
\eqref{eq:L8_du} yields
\[
\max_{i,j}
\Bigg(\frac{1}{n}\sum_{t=p}^{n-h}(\hat u_{i,t}\hat u_{j,t}-u_{i,t}u_{j,t})^2\Bigg)^{1/2}
=
O_p\!\left(\|\widehat{\mathbf A}^{(\mathrm{re})}-\mathbf A\|_\infty\right).
\]

Finally, since
$\Delta e_{t,h}=v_1'\big((\widehat{\mathbf A}^{(\mathrm{re})})^h-\mathbf A^h\big)W_t$,
\[
(\hat e_{t,h}^{\,2}-e_{t,h}^{\,2})^2
=
(2e_{t,h}\Delta e_{t,h}+\Delta e_{t,h}^2)^2
\le
8e_{t,h}^2\Delta e_{t,h}^2+2\Delta e_{t,h}^4,
\]
hence
\[
\frac{1}{n}\sum_{t=p}^{n-h}(\hat e_{t,h}^{\,2}-e_{t,h}^{\,2})^2
=
O_p\!\left(
k_A^2\big\|(\widehat{\mathbf A}^{(\mathrm{re})})^h-\mathbf A^h\big\|_\infty^2
\right),
\]
using $\frac{1}{n}\sum_{t=p}^{n-h} e_{t,h}^4=O_p(k_A^4)$ and
$\max_i \frac{1}{n}\sum_{t=p}^{n-h}\|W_{i,t}\|_\infty^4=O_p(1)$.

Substituting these bounds into \eqref{eq:L8_CS} gives
\begin{align}
\label{eq:L8_rate}
\begin{split}
\Bigg\|
\frac{1}{n}\sum_{t=p}^{n-h}\hat u_t\hat u_t'\hat e_{t,h}^{\,2}
-
\frac{1}{n}\sum_{t=p}^{n-h}u_tu_t'e_{t,h}^{\,2}
\Bigg\|_{\max}
&=
O_p\!\left(
k_A\big\|(\widehat{\mathbf A}^{(\mathrm{re})})^h-\mathbf A^h\big\|_\infty
\right)
+
O_p\!\left(
k_A^2\|\widehat{\mathbf A}^{(\mathrm{re})}-\mathbf A\|_\infty
\right).
\end{split}
\end{align}
Since $h$ is fixed,
\[
\big\|(\widehat{\mathbf A}^{(\mathrm{re})})^h-\mathbf A^h\big\|_\infty
=
O_p\!\left(
k_A^2\|\widehat{\mathbf A}^{(\mathrm{re})}-\mathbf A\|_\infty
\right),
\]
and therefore the leading term in \eqref{eq:L8_rate} is
$O_p\!\left(k_A^3\|\widehat{\mathbf A}^{(\mathrm{re})}-\mathbf A\|_\infty\right)$.
Under the maintained rate
$\|\widehat{\mathbf A}^{(\mathrm{re})}-\mathbf A\|_\infty
=O_p\!\left(k_A^{1.5}(\nu_n/n)^{(1-\mu)/2}\right)$, we obtain
\[
\Bigg\|
\frac{1}{n}\sum_{t=p}^{n-h}\hat u_t\hat u_t'\hat e_{t,h}^{\,2}
-
\frac{1}{n}\sum_{t=p}^{n-h}u_tu_t'e_{t,h}^{\,2}
\Bigg\|_{\max}
=
O_p\!\left(
k_A^{4.5}\big(\nu_n/n\big)^{(1-\mu)/2}
\right)
=
o_p(1).
\]
The same argument applies to the finitely many cross terms $(a,b)\neq(0,0)$, and
thus $n^{-1}\sum_{t=p}^{n-h}(\hat z_t^{\,2}-z_t^2)=o_p(1)$. Combining with Step~1
yields $v'(\Omega_{U,h}^{(\mathrm{hc})}-\Omega_{U,h})v=o_p(1)$.

\medskip
\noindent\textit{HAC.}
The HAC estimator is formed directly from the original score process
$U_t e_{t,h}$. In particular,
\begin{align}
\label{eq:hac_form_direct}
v'\Omega_{U,h}^{(\mathrm{hac})}v
=
\sum_{|\ell|\le b_n} K(\ell/b_n)\,\hat\gamma_\ell,
\qquad
\hat\gamma_\ell
:=
\frac{1}{n}\sum_{t=p+|\ell|}^{n-h}
\bigl(v'\hat U_t\hat e_{t,h}\bigr)\bigl(v'\hat U_{t-\ell}\hat e_{t-\ell,h}\bigr),
\end{align}
and let $\gamma_\ell:=\mathbb E\!\left[(v' U_t e_{t,h})(v' U_{t-\ell} e_{t-\ell,h})\right]$ so that
$v'\Omega_{U,h}v=\sum_{\ell=-\infty}^{\infty}\gamma_\ell$.

Fix any integer $\ell$. Write
\[
\hat\gamma_\ell-\gamma_\ell
=
\Bigg(
\frac{1}{n}\sum_{t=p+|\ell|}^{n-h}
(v' U_t e_{t,h})(v' U_{t-\ell} e_{t-\ell,h})
-\gamma_\ell
\Bigg)
+
\Big(\hat\gamma_\ell-\tilde\gamma_\ell\Big),
\]
where
\[
\tilde\gamma_\ell
:=
\frac{1}{n}\sum_{t=p+|\ell|}^{n-h}
\bigl(v' U_t e_{t,h}\bigr)\bigl(v' U_{t-\ell} e_{t-\ell,h}\bigr).
\]
The first term is a clean-term sample autocovariance error and is $o_p(1)$ for
each fixed $\ell$ by the same weak-dependence and moment conditions as in the HC
part.

For the plug-in error $\hat\gamma_\ell-\tilde\gamma_\ell$, note that
\[
|\hat\gamma_\ell-\tilde\gamma_\ell|
\le
\Bigg\|
\frac{1}{n}\sum_{t=p+|\ell|}^{n-h}
\hat U_t\hat U_{t-\ell}'\hat e_{t,h}\hat e_{t-\ell,h}
-
\frac{1}{n}\sum_{t=p+|\ell|}^{n-h}
U_t U_{t-\ell}' e_{t,h} e_{t-\ell,h}
\Bigg\|_{\max}.
\]
Since $\ell$ is fixed, the same clean-term and estimation-bias argument used to
prove \eqref{represen:block} applies to the lagged cross-products
$u_tu_{t-\ell}'e_{t,h}e_{t-\ell,h}$, yielding
\[
\Bigg\|
\frac{1}{n}\sum_{t=p+|\ell|}^{n-h}
\hat U_t\hat U_{t-\ell}'\hat e_{t,h}\hat e_{t-\ell,h}
-
\frac{1}{n}\sum_{t=p+|\ell|}^{n-h}
U_t U_{t-\ell}' e_{t,h} e_{t-\ell,h}
\Bigg\|_{\max}
=o_p(1),
\]
and hence $\hat\gamma_\ell-\tilde\gamma_\ell=o_p(1)$ (uniformly over fixed
$\ell$) because $\|v\|_1=1$. Therefore $\hat\gamma_\ell-\gamma_\ell=o_p(1)$ for
each fixed $\ell$.

Standard HAC consistency then yields
\[
\sum_{|\ell|\le b_n} K(\ell/b_n)\hat\gamma_\ell
-
\sum_{\ell=-\infty}^{\infty}\gamma_\ell
=o_p(1),
\]
using continuity of $K$ at $0$ with $K(0)=1$, $b_n\to\infty$, $b_n/n\to0$, and
absolute summability of $\{\gamma_\ell\}$. Hence
$v'(\Omega_{U,h}^{(\mathrm{hac})}-\Omega_{U,h})v=o_p(1)$.

\end{proof}

\end{document}